\documentclass[11pt,a4paper]{article}
\usepackage{amsfonts}
\usepackage{amssymb}
\usepackage{amsmath}
\usepackage{cite,enumerate,amsthm,enumerate}

\usepackage{lscape}
\usepackage{diagbox}
\usepackage{multirow}
\usepackage{bigstrut}
\usepackage{float}

\pagestyle{empty} \textwidth150mm \textheight235mm
\oddsidemargin10mm \topmargin-5mm

\newtheorem{theorem}{Theorem}[section]
\newtheorem{lemma}[theorem]{Lemma}

\newtheorem{corollary}[theorem]{Corollary}

\newtheorem{example}[theorem]{Example}
\newtheorem{remark}[theorem]{Remark}

\usepackage{bigstrut}
\usepackage{diagbox}
\usepackage{multirow}
\usepackage{mathdots}

\begin{document}

\setcounter{page}{1}

\markboth{Somphong Jitman and Todsapol Mankean}{Matrix-Product Constructions for Hermitian Self-Orthogonal Codes}

\title{Matrix-Product Constructions for Hermitian Self-Orthogonal Codes\thanks{This research was supported by the Thailand Research Fund under Research Grant MRG6080012.}}

\author{Somphong Jitman\thanks{S. Jitman (Corresponding Author) is with the Department of Mathematics, Faculty of Science,  Silpakorn University, Nakhon Pathom 73000,  Thailand.
        Email: sjitman@gmail.com} and Todsapol Mankean\thanks{T. Mankean is with the  Department of Mathematics, Faculty of Science,  Silpakorn University,  Nakhon Pathom 73000,  Thailand. Email: tong.todsapol2@gmail.com}}

\maketitle

\begin{abstract}
Self-orthogonal codes have been of interest due to there rich algebraic structures and wide applications.    Euclidean self-orthogonal  codes have been quite well studied in literature. 
Here,  we have  focused on Hermitian self-orthogonal codes.   Constructions of such codes have been given   based on the well-known  matrix-product construction for linear codes. Criterion for the  underlying matrix and the  input codes required in such constructions have been determined.  In many cases,    the Hermitian self-orthogonality of the input codes and the  assumption that the underlying matrix is unitary  can be  relaxed.  Some special matrices used in the constructions and illustrative  examples of good Hermitian self-orthogonal codes have been provided  as well.
\end{abstract}

\noindent{\bf Keywords}: {matrix-product codes,  Hermitian self-orthogonal codes, quasi-unitary matrices.}

\noindent {\bf AMS Subject Classification}{:  94B05, 94B60}

\section{Introduction}


Self-orthogonal codes  constitute an important class of linear codes due to their rich algebraic structures and  wide  applications (see  \cite{JLLX2010},  \cite{JX2012}, \cite{P1972},  \cite{ZG2015}, and references therein).   In \cite{BN2001},  a nice construction   that can produce linear codes with explicit and good parameters has been introduced, namely, a matrix-product construction.     The said construction can be viewed as    a generalization of the well-known  
$(u|u + v)$-construction and  $(u + v + w|2u + v|u)$-construction  (see  \cite{BN2001}).     In   \cite{BN2001},  properties of matrix-product codes have been studied as well as a lower bound for the minimum distance of the output codes.  In some cases, the lower bound  given in \cite{BN2001}  has been  shown to be sharped  in \cite{HLR2009}. 

In \cite{GHR2015},  the  matrix-product  construction has been applied in constructing   Euclidean self-orthogonal codes in the case where  the underlying matrix is   a square  orthogonal matrix.   In the same fashion, this  idea  has been extended to construct  Hermitian  self-orthogonal codes in \cite{LLY2016} and \cite{ZG2015}.  However, in both cases, the input codes are required to be   self-orthogonal and the underlying matrix must be either orthogonal or unitary. 
For the Euclidean case, the Euclidean self-orthogonality of the input codes and the assumption that the matrix is orthogonal have  been relaxed in  \cite{MJ2017}.

In this paper, we  extend the concept in \cite{MJ2017} to cover  the  Hermitian case. The Hermitian self-orthogonality of the input codes and the assumption that the underlying is unitary can be  relaxed in many cases. Matrices used in the constructions are studied together with examples of some good  Hermitian self-orthogonal matrix-product codes.

The  paper is organized as follows. 
Some basic properties of matrices, linear codes, self-orthogonal codes, and matrix-product codes are recalled in Section 2. Two matrix-product constructions  for Hermitian self-orthogonal codes are discussed in Section 3. In Section 4, the study of   special matrices over finite fields  is  given. Illustrative examples of good matrix-product Hermitian self-orthogonal codes  are provided in Section 5.

\section{Preliminaries}
Let $q$ be a prime power and let $\mathbb{F}_q$ denote the finite field of order $q$. Some properties of matrices  and codes over $\mathbb{F}_q$ used in this paper are recalled  in the following subsections.

\subsection{Matrices}
For positive integers $s\leq l$, denote by $M_{s,l}(\mathbb{F}_q)$ the set of $s\times l$ matrices whose entries are from $\mathbb{F}_q$. 
A matrix $ A\in M_{s,l}(\mathbb{F}_q) $ is said to be {\em full-row-rank} if the rows of $A$ are linearly independent.  
Denote by $\mathrm{diag}(\lambda_1,\lambda_2,\dots, \lambda_s)$ the $s\times s$ {\em diagonal matrix} whose diagonal entries are  $\lambda_1,\lambda_2,\dots, \lambda_s$.  Similarly, let  $\mathrm{adiag}(\lambda_1,\lambda_2,\dots, \lambda_s)$ denote the  $s\times s$ {\em anti-diagonal matrix} whose anti-diagonal entries are  $\lambda_1,\lambda_2,\dots, \lambda_s$.   Denote by $I_s$ and $J_s$ the matrices $\mathrm{diag}(1,1,\dots,1)$ and $\mathrm{adiag}(1,1,\dots,1)$, respectively.

Assume that $q=r^2$ is square.  For a matrix $A=[a_{ij}]\in M_{s,l}(\mathbb{F}_q)$, let $A^\dagger:= [ \overline{a_{ji}}]\in M_{s,l}(\mathbb{F}_q)$, where $\overline{a}:=a^r$ for all $a \in \mathbb{F}_q$. In  this paper,  a     matrix $A\in M_{s,l}(\mathbb{F}_q)$ with the property that $AA^{\dag}$ is diagonal or  anti-diagonal   is required in  the  constructions of Hermitian self-orthogonal codes. To the best of our knowledge, there are no proper names  for such matrices.   For convenience,  the following definitions are given. A matrix $A \in M_{s,l}(\mathbb{F}_q)$ is said to be  {\em weakly semi-unitary} if $AA^{\dag}$ is diagonal and it is said to be  {\em weakly anti-semi-unitary} if $AA^{\dag}$ is anti-diagonal. In the case where $A$ is square, such matrices are called   {\em weakly quasi-unitary} and  {\em weakly anti-quasi-unitary}, respectively.
A matrix $A\in M_{s,s}(\mathbb{F}_q) $ is called a   {\em unitary  matrix} if $AA^\dagger=I_s$ and it  is called a   {\em quasi-unitary matrix} if $AA^\dagger=\lambda I_s$ for some non-zero $\lambda\in \mathbb{F}_q$.

\subsection{Linear Codes}
For each positive integer $n$, denote by $\mathbb{F}_q^n$ the $\mathbb{F}_q$-vector space of all vectors of length $n$ over $\mathbb{F}_q$.    A set $C\subseteq \mathbb{F}_q^n$ is called  a {\em linear code of length} $n$ over $\mathbb{F}_q$  if it is a subspace of the vector space  $\mathbb{F}_q^n$. A linear code $C$  of length $n$ over $\mathbb{F}_q$ is said to have parameters   $[n,k,d]_q$ if the  $\mathbb{F}_q$-dimension  of $C$ is $k$ and   the {\em minimum Hamming weight}  $d(C)$ of $C$ is 
\[{d}:=\min\{{\mathrm wt}(\boldsymbol{u})\mid \boldsymbol{u}  \in C\setminus \{0\}  \},\] 
where ${\mathrm wt}(\boldsymbol{u})$ is the number of nonzero entries in $\boldsymbol{u}$.
A  $k\times n$ matrix $G$ over $\mathbb{F}_q$ is called a {\em generator matrix} for an $[n,k,d]_q$ code $C$ if the rows of $G$  form a basis of $C$.

In addition,  assume that $q=r^2$ is a  square.   For  $\boldsymbol{v}=(v_1,v_2,\dots,v_n)\in \mathbb{F}_q^n$, let   $\overline{\boldsymbol{v}}:=(\overline{v_1},\overline{v_2},\dots,\overline{v_n})$, where $\overline{a}:=a^r$ for all $a\in \mathbb{F}_q$.  
For  $\boldsymbol{u}=(u_1,u_2,\dots,u_n)$ and  $\boldsymbol{v}=(v_1,v_2,\dots,v_n)$ in $\mathbb{F}_q^n$,  the {\it Hermitian inner product} of $\boldsymbol{u}$ and
$\boldsymbol{v}$ is defined by
\[\langle \boldsymbol{u},\boldsymbol{v}\rangle_H=\sum_{i=1}^nu_i\overline{v_i}.\] The {\em Hermitian dual}  of a linear code $C\subseteq \mathbb{F}_q^n$ is defined to be the set
\[C^{\perp_H}=\{\boldsymbol{v} \in \mathbb{F}_q^n\mid \langle \boldsymbol{u},\boldsymbol{v}\rangle _H=0 \text{ for all } \boldsymbol{v}\in C\}.\]
A linear code $C$ is said to be {\it Hermitian self-orthogonal} (resp., {\it self-dual})  
if   $C\subseteq C^{\perp_H}$ (resp., $C=C^{\perp_H})$.  

For linear codes $C_1$ and $C_2$ of the same length over $\mathbb{F}_q$, it is well known that if $C_i$ is generated by $G_i$ for $i\in \{1,2\}$, then  $G_1G_2^\dagger=[\boldsymbol{0}]$ if and only if  $C_1\subseteq C_2^{\perp_H} $. Especially, $G_1G_1^\dagger=[\boldsymbol{0}]$  if and only if  $C_1$ is  Hermitian self-orthogonal.

\subsection{Matrix-Product Codes} A matrix-product  construction for linear codes has been introduced in \cite{BN2001} and extensively studied in  \cite{BS2015} and \cite{HLR2009}.  The major results are summarized as follows.   For each integers $1\leq s\leq l$, let $C_{i}$ be a  linear  $[m,k_i,d_i]_q$ code   over $\mathbb{F}_{q}$ with generator matrix $G_i$ and let  $A=[a_{ij}]\in M_{s,l}(\mathbb{F}_{q})$. The {\em matrix-product code} $[C_{1},C_{2},\cdots,C_{s}]\cdot A$ is defined to be  the linear code of length $ml$ over $\mathbb{F}_{q}$  with generator matrix

\[G=\left[
\begin{array}{cccc}
a_{11}G_{1} & a_{12}G_{1} & \cdots & a_{1l}G_{1} \\
a_{21}G_{2} & a_{22}G_{2} & \cdots & a_{2l}G_{2} \\
\vdots & \vdots & \ddots & \vdots \\
a_{s1}G_{s} & a_{s2}G_{s} & \cdots & a_{sl}G_{s} \\
\end{array}
\right].\] 
The    matrix-product code $[C_{1},C_{2},\cdots,C_{s}]\cdot A$ is simply denoted by $C_A$ if $C_1,C_2,\dots, C_s$ are clear in the context.

For each  $A\in M_{s,l}(\mathbb{F}_{q})$ and for each $1\leq i\leq s$, denote by 
$\delta_{i}(A)$  the minimum weight  of the  linear code of length $l$ over $\mathbb{F}_{q}$ generated by the first $i$ rows of   $A$.  Some properties of  matrix-product codes are given in the following theorem. 

\begin{theorem}\label{thm0} Assume the notations above. Then the following statements hold.
    \begin{enumerate}
        \item $C_A$  is a linear code of length $ml$ over $\mathbb{F}_q$.
        \item   $\dim(C_A)\leq  \sum\limits_{i=1}^{s}k_{i}$.
        \item If $A$ is full-row-rank, then  $\dim(C_ A)=\sum\limits_{i=1}^{s}k_{i}$.
        \item     ${\mathrm d_\mathrm{H}}(C_ A)\geq  \min\limits_{1\leq i\leq s}\{d_{i}\delta_{i}(A)\}$.  
        \item    If  $C_1\supseteq C_2\supseteq \dots \supseteq C_s$, then  ${\mathrm d}(C_ A)= \min\limits_{1\leq i\leq s}\{d_{i}\delta_{i}(A)\}.$ 
    \end{enumerate}
\end{theorem}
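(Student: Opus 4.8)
For parts (1)--(3) I would argue directly from the shape of the generator matrix $G$, which has $\sum_{i=1}^{s}k_i$ rows and $ml$ columns. Its row space is by definition a subspace of $\mathbb{F}_q^{ml}$, giving (1), and the dimension of a row space never exceeds the number of rows, giving (2). For (3) I would parametrise a codeword by input messages $\boldsymbol{x}_i\in\mathbb{F}_q^{k_i}$: writing $\boldsymbol{c}_i=\boldsymbol{x}_iG_i\in C_i$, the codeword $\sum_{i=1}^{s}\boldsymbol{x}_i[a_{i1}G_i\mid\cdots\mid a_{il}G_i]$ has $j$-th block $\sum_{i=1}^{s}a_{ij}\boldsymbol{c}_i$. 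If this codeword is $\boldsymbol{0}$, then, collecting the $\boldsymbol{c}_i$ as the rows of a matrix $M$, the condition reads $A^{\top}M=[\boldsymbol{0}]$; full-row-rank of $A$ makes $\boldsymbol{x}\mapsto A^{\top}\boldsymbol{x}$ injective, forcing $M=[\boldsymbol{0}]$, i.e. every $\boldsymbol{c}_i=\boldsymbol{0}$, and then full-row-rank of each $G_i$ forces $\boldsymbol{x}_i=\boldsymbol{0}$. Hence the rows of $G$ are independent and $\dim(C_A)=\sum_{i=1}^{s}k_i$.

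For (4), the plan is to re-slice a codeword coordinatewise. Fix a nonzero $\boldsymbol{c}\in C_A$ with blocks $\boldsymbol{d}_j=\sum_{i=1}^{s}a_{ij}\boldsymbol{c}_i$, and let $t$ be the largest index with $\boldsymbol{c}_t\neq\boldsymbol{0}$. For each intra-block coordinate $k\in\{1,\dots,m\}$ let $\boldsymbol{v}_k\in\mathbb{F}_q^{l}$ collect the $k$-th entry of every block; then $\boldsymbol{v}_k=\boldsymbol{c}^{(k)}A_t$, where $\boldsymbol{c}^{(k)}=(c_{1,k},\dots,c_{t,k})$ and $A_t$ is the matrix of the first $t$ rows of $A$, so $\boldsymbol{v}_k$ lies in the code generated by the first $t$ rows of $A$. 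Since $\mathrm{wt}(\boldsymbol{c})=\sum_{k=1}^{m}\mathrm{wt}(\boldsymbol{v}_k)$ and every nonzero $\boldsymbol{v}_k$ has weight at least $\delta_t(A)$, it remains to show that at least $d_t$ of the $\boldsymbol{v}_k$ are nonzero. This is where the rank structure of $A$ enters: when the first $t$ rows of $A$ are linearly independent the map $\boldsymbol{x}\mapsto\boldsymbol{x}A_t$ is injective, so $c_{t,k}\neq0$ already forces $\boldsymbol{v}_k\neq\boldsymbol{0}$; as $\boldsymbol{c}_t\in C_t$ is nonzero it has at least $d_t$ nonzero coordinates, giving at least $d_t$ nonzero slices and hence $\mathrm{wt}(\boldsymbol{c})\geq d_t\,\delta_t(A)\geq\min_{1\leq i\leq s}\{d_i\delta_i(A)\}$.

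I expect this counting step to be the main obstacle, and it is exactly the point at which the independence of the first $t$ rows of $A$ (equivalently, $A$ being full-row-rank, as in part (3)) is needed: if those rows are dependent then a nonzero $\boldsymbol{c}_t$ can be annihilated by $A_t$ and the bound genuinely fails, so this hypothesis must be carried through.

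Finally, for (5) the lower bound is (4), so I only need a codeword meeting it. I would choose an index $i_0$ attaining $\min_{1\leq i\leq s}\{d_i\delta_i(A)\}$, a minimum-weight word $\boldsymbol{c}_{i_0}\in C_{i_0}$ with $\mathrm{wt}(\boldsymbol{c}_{i_0})=d_{i_0}$, and a minimum-weight word $\boldsymbol{w}=\sum_{i=1}^{i_0}\lambda_i\boldsymbol{a}_i$ in the code generated by the first $i_0$ rows of $A$, so $\mathrm{wt}(\boldsymbol{w})=\delta_{i_0}(A)$. Setting $\boldsymbol{c}_i=\lambda_i\boldsymbol{c}_{i_0}$ for $i\leq i_0$ and $\boldsymbol{c}_i=\boldsymbol{0}$ otherwise is legitimate precisely because the nesting $C_1\supseteq\cdots\supseteq C_s$ gives $\boldsymbol{c}_{i_0}\in C_{i_0}\subseteq C_i$ for every $i\leq i_0$. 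The resulting codeword has $j$-th block $\big(\sum_{i=1}^{i_0}\lambda_i a_{ij}\big)\boldsymbol{c}_{i_0}=w_j\boldsymbol{c}_{i_0}$, so its weight is $\mathrm{wt}(\boldsymbol{w})\cdot d_{i_0}=\delta_{i_0}(A)\,d_{i_0}$, which matches the bound and yields the claimed equality.
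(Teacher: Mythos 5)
Your proof is correct, but note that the paper itself offers no proof of this theorem: it is stated as a summary of known results and attributed to the references on matrix-product codes (Blackmore--Norton and Hernando--Lally--Ruano), so there is no in-paper argument to compare against. Your argument is essentially the standard one from those sources: (1)--(3) from the shape and row-independence of $G$, (4) by re-slicing a codeword into the $m$ vectors $\boldsymbol{v}_k=\boldsymbol{c}^{(k)}A_t\in\mathbb{F}_q^l$ and counting nonzero slices via the weight of $\boldsymbol{c}_t$, and (5) by exhibiting a codeword of weight exactly $d_{i_0}\delta_{i_0}(A)$ using the nesting hypothesis. All steps check out. One point you make deserves emphasis: parts (4) and (5) genuinely require $A$ to be full-row-rank (equivalently, that the first $t$ rows of $A$ are independent for every $t$), a hypothesis the paper's statement of the theorem omits even though it lists full-row-rank only as the hypothesis of part (3). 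Without it the bound fails --- for instance over $\mathbb{F}_2$ with $A=\left[\begin{smallmatrix}1&0\\1&0\end{smallmatrix}\right]$, $C_1=\langle(1,1,0)\rangle$ and $C_2=\langle(1,1,1)\rangle$, the code $C_A$ contains a word of weight $1$ while $\min\{d_1\delta_1,d_2\delta_2\}=2$. Since every matrix used later in the paper is invertible or full-row-rank, this does not affect the applications, but you are right that the hypothesis must be carried through the proof of (4).
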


From now on, we assume that $q=r^2$ is a square and focus on the dual   of a matrix product code with respect to the Hermitian inner product. 
If $A$ is an invertible square matrix, the Hermitian  dual of  a matrix-product   codes is again matrix-product and determined as follows.

\begin{theorem} \label{thm21}
    Assume the notation above and $s=\ell$. If $A$ is a nonsingular $s\times s$ matrix, then 
    $$([C_{1},C_{2},\cdots,C_{s}]\cdot A)^{\bot_H}=[C_{1}^{\bot_H},C_{2}^{\bot_H},\cdots,C_{s}^{\bot_H}]\cdot (A^{-1})^\dagger.$$ 
\end{theorem}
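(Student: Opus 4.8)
The plan is to show the two codes are equal by first proving the inclusion $[C_{1}^{\bot_H},\dots,C_{s}^{\bot_H}]\cdot (A^{-1})^{\dagger}\subseteq ([C_{1},\dots,C_{s}]\cdot A)^{\bot_H}$ with the orthogonality criterion recalled just before the theorem, and then forcing equality by a dimension count. Write $B:=(A^{-1})^{\dagger}=[b_{ij}]$ and, for each $i$, fix a generator matrix $H_i$ of $C_i^{\bot_H}$, so $H_i$ has $m-k_i$ rows and $G_iH_i^{\dagger}=[\boldsymbol{0}]$ (the latter because $C_i\subseteq(C_i^{\bot_H})^{\bot_H}$). Let $G$ be the displayed generator matrix of $C_A:=[C_1,\dots,C_s]\cdot A$, whose $(i,j)$ block is $a_{ij}G_i$, and let $G'$ be the analogous generator matrix of $D:=[C_1^{\bot_H},\dots,C_s^{\bot_H}]\cdot B$, whose $(p,q)$ block is $b_{pq}H_p$. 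By the criterion it then suffices to check $G'G^{\dagger}=[\boldsymbol{0}]$, which yields $D\subseteq C_A^{\bot_H}$.

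The heart of the argument is a block computation of $G'G^{\dagger}$. First I would note that for a scalar $c$ and a matrix $M$ one has $(cM)^{\dagger}=\overline{c}\,M^{\dagger}$, so the block of $G'G^{\dagger}$ in block-row $p$ and block-column $i$ equals $\sum_{j=1}^{s}(b_{pj}H_p)(a_{ij}G_i)^{\dagger}=\big(\sum_{j=1}^{s}b_{pj}\,\overline{a_{ij}}\big)H_pG_i^{\dagger}$. The scalar coefficient is exactly the $(p,i)$ entry of $BA^{\dagger}$. The decisive step is to simplify this matrix: since $\overline{\overline{a}}=a^{r^2}=a^{q}=a$ for every $a\in\mathbb{F}_q$, the map $\dagger$ is an involution and anti-multiplicative, whence $BA^{\dagger}=(A^{-1})^{\dagger}A^{\dagger}=(AA^{-1})^{\dagger}=I_s$. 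Therefore the $(p,i)$ block of $G'G^{\dagger}$ is $\delta_{pi}H_iG_i^{\dagger}$, which vanishes for $p\neq i$ and equals $H_iG_i^{\dagger}=(G_iH_i^{\dagger})^{\dagger}=[\boldsymbol{0}]$ for $p=i$. Hence $G'G^{\dagger}=[\boldsymbol{0}]$ and $D\subseteq C_A^{\bot_H}$.

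To upgrade to equality I would compare dimensions. Since $A$ is nonsingular it is full-row-rank, so Theorem~\ref{thm0}(3) gives $\dim C_A=\sum_{i=1}^{s}k_i$; as $C_A$ has length $ms$ and the Hermitian form is non-degenerate, $\dim C_A^{\bot_H}=ms-\sum_{i=1}^{s}k_i$. Likewise $B=(A^{-1})^{\dagger}$ is nonsingular, so Theorem~\ref{thm0}(3) applied to $D$ gives $\dim D=\sum_{i=1}^{s}(m-k_i)=ms-\sum_{i=1}^{s}k_i$. The two dimensions coincide, so the inclusion $D\subseteq C_A^{\bot_H}$ is an equality, which is the asserted identity.

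I do not expect a serious obstacle here; the only point requiring care is the bookkeeping that couples the entrywise conjugation coming from $\dagger$ with the block structure of the generator matrices. Getting the conjugates onto the correct factor is what makes the scalar matrix collapse to $BA^{\dagger}=I_s$, and this collapse, together with $G_iH_i^{\dagger}=[\boldsymbol{0}]$, is precisely what drives both the off-diagonal and the diagonal blocks to zero. The hypotheses $s=\ell$ and the nonsingularity of $A$ are essential: they guarantee both that the dual matrix $(A^{-1})^{\dagger}$ is defined and square and that the dimension count closes exactly.
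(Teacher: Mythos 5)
Your proof is correct. Note that the paper itself states Theorem~\ref{thm21} without proof, presenting it as a known fact about matrix-product codes recalled from the literature, so there is no in-paper argument to compare against; you have supplied a complete justification where the paper offers none. The argument you give is the standard one and every step checks out: the block computation correctly reduces the $(p,i)$ block of $G'G^{\dagger}$ to $(BA^{\dagger})_{pi}\,H_pG_i^{\dagger}$, the identity $BA^{\dagger}=(A^{-1})^{\dagger}A^{\dagger}=(AA^{-1})^{\dagger}=I_s$ is valid because $a\mapsto a^{r}$ is a field automorphism with $a^{r^2}=a$ (so $\dagger$ is an anti-multiplicative involution), the diagonal blocks vanish by the orthogonality criterion applied to $C_i^{\perp_H}\subseteq C_i^{\perp_H}$, and the dimension count via Theorem~\ref{thm0}(3) (applicable to both $A$ and the nonsingular matrix $(A^{-1})^{\dagger}$) together with non-degeneracy of the Hermitian form closes the equality. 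The only cosmetic remark is that when some $k_i\in\{0,m\}$ the ``generator matrix'' $H_i$ or $G_i$ is empty, but the statement is trivially compatible with those degenerate cases.
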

From Theorem \ref{thm21},     a matrix-product  construction  has been applied for   Hermitian self-orthogonal codes in   \cite{LLY2016} and \cite{ZG2015}, where   $A$ is a $s\times s$ unitary matrix    and the input codes $C_i$' are Hermitian  self-orthogonal. 

In general the Hermitian dual of a matrix-product code does not need to be matrix-product. 
In this paper, we focus on  this general set up for Hermitian self-orthogonal matrix-product codes. The restriction on the Hermitian self-orthogonality of the input codes and the condition that $A$ is unitary can be  relaxed in many cases.    The detailed constructions are given in the  next section.

\section{Constructions}	
 In this section,  we focus on two  types of  matrix-product constructions for Hermitian self-orthogonal linear codes.    Sufficient conditions on the matrix and the  input codes for matrix-product codes to be Hermitian self-orthogonal are given.

In the following theorem, a matrix-product construction for Hermitian  self-orthogonal codes whose input codes are Hermitian self-orthogonal is discussed. The results    are a bit more general than the ones in \cite{GHR2015} since the underlying matrix does not need to be unitary. The construction is given as follows.
\begin{theorem} \label{thm3}
    Let    $s \leq l$ be positive integers. Let  $C_{1},C_{2},\dots ,C_{s}$  be linear codes of the same length over $\mathbb{F}_q$ and let  $A  \in M_{s \times l}(\mathbb{F}_q)$.
    If $AA^\dag $ is diagonal and $C_{i} \subseteq C_{i}^{{\perp}_{H}}$ for all $1\leq i \leq s$, then $C_{A} \subseteq C_{A}^{{\perp}_{H}}$.
\end{theorem}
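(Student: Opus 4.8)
The plan is to use the characterization recalled at the end of Section~2.2: a linear code with spanning matrix $G$ is Hermitian self-orthogonal precisely when $GG^{\dagger}=[\boldsymbol{0}]$. Since the rows of the displayed matrix $G$ span $C_A$, it suffices to verify that every pairwise Hermitian inner product of its rows vanishes, i.e.\ that $GG^{\dagger}=[\boldsymbol{0}]$. So the whole proof reduces to a single block-matrix computation, and no properties of $C_A$ beyond its generating set are needed.

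First I would exploit the block structure of $G$. Writing $G_i$ for a generator matrix of $C_i$ (a $k_i\times m$ matrix), the matrix $G$ is partitioned into blocks whose $(i,j)$ entry is $a_{ij}G_i$. Taking the Hermitian adjoint block by block, and using that $a_{ij}$ is a scalar so that $(a_{ij}G_i)^{\dagger}=\overline{a_{ij}}\,G_i^{\dagger}$, the $(j,i)$ block of $G^{\dagger}$ is $\overline{a_{ij}}\,G_i^{\dagger}$. Multiplying, the $(i,i')$ block of $GG^{\dagger}$ is the $k_i\times k_{i'}$ matrix
\[
\sum_{j=1}^{l} (a_{ij}G_i)(\overline{a_{i'j}}\,G_{i'}^{\dagger}) = \left(\sum_{j=1}^{l} a_{ij}\overline{a_{i'j}}\right) G_i G_{i'}^{\dagger}.
\]
The key observation, and really the crux of the argument, is that the scalar in parentheses is exactly the $(i,i')$ entry of $AA^{\dagger}$, since $(A^{\dagger})_{j,i'}=\overline{a_{i'j}}$. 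Thus the $(i,i')$ block of $GG^{\dagger}$ equals $(AA^{\dagger})_{i,i'}\,G_i G_{i'}^{\dagger}$, which cleanly separates the contribution of the matrix $A$ from that of the input codes.

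It then remains to kill every block using the two hypotheses, split into two cases. For the off-diagonal blocks $i\neq i'$, the diagonality of $AA^{\dagger}$ forces $(AA^{\dagger})_{i,i'}=0$, so the whole block vanishes regardless of $G_i G_{i'}^{\dagger}$. For the diagonal blocks $i=i'$, the Hermitian self-orthogonality of $C_i$ gives $G_i G_i^{\dagger}=[\boldsymbol{0}]$ (again by the characterization from Section~2.2), so the block vanishes regardless of the scalar $(AA^{\dagger})_{i,i}$. Hence $GG^{\dagger}=[\boldsymbol{0}]$, and therefore $C_A\subseteq C_A^{\perp_H}$.

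I do not expect a genuine obstacle here: the statement is a direct computation once the blocks are set up correctly. The only points demanding care are bookkeeping ones, namely correctly conjugating the scalars $a_{ij}$ when passing to $G^{\dagger}$ (so that the coefficient comes out as $\sum_j a_{ij}\overline{a_{i'j}}$ and not its conjugate), and recognizing this sum as the $(i,i')$ entry of $AA^{\dagger}$. It is worth noting that the argument uses the two hypotheses in an entirely decoupled way: diagonality of $AA^{\dagger}$ handles the off-diagonal blocks while self-orthogonality of the $C_i$ handles the diagonal ones. This decoupling is exactly what explains the relaxation advertised in the introduction, since only the off-diagonal vanishing of $AA^{\dagger}$ is used and the diagonal entries are never required to equal $1$; in particular $A$ need not be unitary.
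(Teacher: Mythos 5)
Your proposal is correct and follows essentially the same route as the paper's proof: both compute $GG^{\dagger}$ blockwise, identify the $(i,i')$ block as $(AA^{\dagger})_{i,i'}\,G_iG_{i'}^{\dagger}$, and then use diagonality of $AA^{\dagger}$ for the off-diagonal blocks and $G_iG_i^{\dagger}=[\boldsymbol{0}]$ for the diagonal ones. Your write-up is in fact slightly more explicit than the paper's about where the coefficient $\sum_j a_{ij}\overline{a_{i'j}}$ comes from, but there is no substantive difference.
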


\begin{proof}
    Assume that $AA^\dag = \mathrm{diag}(\lambda _{1},\lambda _{2},\dots,\lambda _{s})$ and $ \ C_{i} \subseteq \ C_{i}^{{\perp}_{H}} \ $ for all  $1\leq i \leq s$.  For each $1\leq i\leq s$,  let $G_i$ be a generator matrix for the code  $C_i$.  Assume that  $A = \begin{bmatrix}
    a_{11} &a_{12}  &\cdots   &a_{1l} \\ 
    a_{21} & a_{22} & \cdots  &a_{2l} \\ 
    \vdots  &  \vdots & \ddots  &\vdots  \\ 
    a_{s1}&a_{s2}  &  \cdots & a_{sl}
    \end{bmatrix}$. Then the matrix-product code $C_A$ is generated by
    \[ G = \begin{bmatrix}
    a_{11}G_{1} &a_{12}G_{1}  &\cdots   &a_{1l}G_{1} \\ 
    a_{21}G_{2} & a_{22}G_{2} & \cdots  &a_{2l}G_{2} \\ 
    \vdots  &  \vdots & \ddots  &\vdots  \\ 
    a_{s1}G_{s}&a_{s2}G_{s}  &  \cdots & a_{sl}G_{s}
    \end{bmatrix}.\]
    It follows that 
    \begin{align*}
    GG^\dag & = \begin{bmatrix}
    a_{11}G_{1} &a_{12}G_{1}  &\cdots   &a_{1l}G_{1} \\ 
    a_{21}G_{2} & a_{22}G_{2} & \cdots  &a_{2l}G_{2} \\ 
    \vdots  &  \vdots & \ddots  &\vdots  \\ 
    a_{s1}G_{s}&a_{s2}G_{s}  &  \cdots & a_{sl}G_{s}
    \end{bmatrix}\begin{bmatrix}
    a_{11}^{r}G_{1}^\dag &a_{21}^{r}G_{2}^\dag  &\cdots   &a_{s1}^{r}G_{s}^\dag \\ 
    a_{12}^{r}G_{1}^\dag & a_{22}^{r}G_{2}^\dag & \cdots  &a_{s2}^{r}G_{s}^\dag \\ 
    \vdots  &  \vdots & \ddots  &\vdots  \\ 
    a_{1l}^{r}G_{1}^\dag&a_{2l}^{r}G_{2}^\dag  &  \cdots & a_{sl}^{r}G_{s}^\dag
    \end{bmatrix}.\\ 
    & =   \begin{bmatrix}
    \lambda_{1}(G_{1}G_{1}^\dag) & 0(G_{1}G_{2}^\dag) & \cdots & 0(G_{1}G_{s}^\dag)\\ 
    0(G_{2}G_{1}^\dag) & \lambda_{2}(G_{2}G_{2}^\dag) & \cdots  & 0(G_{2}G_{s}^\dag)\\ 
    \vdots  &  \vdots & \ddots  &\vdots  \\  
    0(G_{s}G_{1}^\dag) & 0(G_{s}G_{2}^\dag) &  \cdots & \lambda_{s}(G_{s}G_{s}^\dag)
    \end{bmatrix} .
    \end{align*}
    Since $ \ C_{i} \subseteq \ C_{i}^{{\perp}_{H}} \ $ for all $1\leq i \leq s$,  we have $G_{i}G_{i}^\dag = [\boldsymbol{0}]$  for all $1\leq i \leq s$. It follows that 
    $GG^\dag = [\boldsymbol{0}]$.  Hence,  $C_{A} \subseteq C_{A}^{{\perp}_{H}}$ as desired.
\end{proof}
If $A$ is a square quasi-unitary, then the following corollary can be deduced.

\begin{corollary}	
    If  $A\in M_{s,s}(\mathbb{F}_q)$  is such that  $AA^\dag = \lambda I_{s}$ for some non-zero $\lambda$ in $\mathbb{F}_q$ and $C_{i} \subseteq C_{i}^{{\perp}_{H}}$ for all $1\leq i \leq s$, then $C_{A} \subseteq C_{A}^{{\perp}_{H}}$.
\end{corollary}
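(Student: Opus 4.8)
The plan is to deduce this corollary directly from Theorem~\ref{thm3}, since the hypothesis here is simply the square specialization of the diagonal condition appearing there. The key observation is that the scalar matrix $\lambda I_s$ is itself a diagonal matrix, namely $\lambda I_s = \mathrm{diag}(\lambda,\lambda,\dots,\lambda)$. Hence the assumption $AA^\dag = \lambda I_s$ is a special instance of ``$AA^\dag$ is diagonal,'' and nothing further about $A$ need be extracted.

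First I would record that the matrix $A \in M_{s,s}(\mathbb{F}_q)$ satisfies the size constraint $s \leq l$ of Theorem~\ref{thm3} with $l = s$, so that $A$ is a legitimate underlying matrix for the construction and $C_A = [C_1,C_2,\dots,C_s]\cdot A$ is defined. Next I would verify that both hypotheses of Theorem~\ref{thm3} hold: the diagonality of $AA^\dag$ follows from the identity $AA^\dag = \lambda I_s = \mathrm{diag}(\lambda,\lambda,\dots,\lambda)$, and the Hermitian self-orthogonality $C_i \subseteq C_i^{\perp_H}$ for all $1 \leq i \leq s$ is assumed outright.

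With both hypotheses in place, I would invoke Theorem~\ref{thm3} to conclude immediately that $C_A \subseteq C_A^{\perp_H}$, completing the proof. There is essentially no obstacle to overcome here: the content is entirely contained in Theorem~\ref{thm3}, and this corollary merely isolates the important special case in which the diagonal matrix $AA^\dag$ has a single repeated eigenvalue, i.e.\ where $A$ is quasi-unitary. The only point worth stating explicitly is that the nonvanishing of $\lambda$ is not actually needed for the self-orthogonality conclusion—it guarantees that $A$ is nonsingular and hence (by Theorem~\ref{thm0}) that $\dim(C_A) = \sum_{i=1}^s k_i$—so if desired I would remark that the corollary would remain true even for $\lambda = 0$, though the quasi-unitary hypothesis is retained to ensure $A$ is full-row-rank.
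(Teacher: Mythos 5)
Your proposal is correct and matches the paper's intent exactly: the corollary is stated as an immediate consequence of Theorem~\ref{thm3} (the paper omits the proof), obtained by observing that $\lambda I_s = \mathrm{diag}(\lambda,\dots,\lambda)$ is diagonal and taking $l=s$. Your side remark that $\lambda\neq 0$ is not needed for the self-orthogonality conclusion itself is also accurate.
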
	

\begin{example} Let $\beta$ be a primitive element of $\mathbb{F}_4$ and let $A = 
    \begin{bmatrix} 
    1 & 1&1 \\
    1 & \beta & \beta^2  \\
    1 & \beta^2 & \beta \\
    \end{bmatrix}$. Then $A$ is invertible,   $AA^\dag = \mathrm{diag}(1,1,1)$, $\delta_1(A)=3, \delta_2(A)=2$ and $\delta_3(A)=1$ .  Let $C_1, C_2$ and $C_3$ be the linear codes of length $6$ over $\mathbb{F}_4$ generated by 
    \[G_{1} = 
    \begin{bmatrix} 
    1& 1 & 1 & 1& 1 & 1 \\
    1 & \beta & \beta^2& \beta^3 & \beta^4 & \beta^5\\
    1 & \beta^2 & \beta^4& \beta^6 & \beta^8 & \beta^{10}\\
    \end{bmatrix},~~~ G_2 =\begin{bmatrix}  	1& 1 & 1 & 1& 1 & 1 \\
    1 & \beta & \beta^2& \beta^3 & \beta^4 & \beta^5     
    \end{bmatrix} ,~~\text{
    and } 	~G_{3} = 
    \begin{bmatrix} 
    1& 1 & 1 & 1& 1 & 1 \\
    \end{bmatrix},\]
    respectively.  Then $C_1, C_2$ and $C_3$ are Hermitian self-orthogonal with parameters $[6, 3, 2]_4, [6,2,4]_4$ and  $[6,1,6]_4$ respectively. Since $C_3 \subseteq C_2\subseteq C_1$, 
    by Theorems \ref{thm0} and \ref{thm3}, $C_A$ is a Hermitian self-orthogonal code with parameters $[18,6,6]_4$.
\end{example}

In the following theorem,  a matrix-product construction for Hermitian self-orthogonal codes is studied in the case where  the Hermitian self-orthogonality of the input codes  are relaxed.

\begin{theorem}\label{thm4}
    Let $s \leq l$ be positive integers. Let  $C_{1},C_{2},\dots ,C_{s}$  be linear codes of the same length over $\mathbb{F}_q$ and let  $A  \in M_{s \times l}(\mathbb{F}_q)$.
    If $AA^\dag $ is anti-diagonal and  $C_{i} \subseteq C_{s-i+1}^{{\perp}_{H}}$  for all $1\leq i \leq s$, then $C_{A} \subseteq C_{A}^{{\perp}_{H}}$.
\end{theorem}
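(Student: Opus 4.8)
The plan is to mirror the proof of Theorem \ref{thm3}, since the Hermitian self-orthogonality of $C_A$ is controlled entirely by the block structure of $GG^\dag$, where $G$ is the generator matrix of $C_A$ displayed in the definition of a matrix-product code. First I would fix a generator matrix $G_i$ for each $C_i$ and write $G$ as the block matrix whose $(i,k)$ block is $a_{ik}G_i$. The substantive first step is the block computation of $GG^\dag$: because the $(k,j)$ block of $G^\dag$ is $\overline{a_{jk}}\,G_j^\dag$, the $(i,j)$ block of $GG^\dag$ collapses to
\[
(GG^\dag)_{ij} = \Bigl(\sum_{k=1}^{l} a_{ik}\,\overline{a_{jk}}\Bigr) G_iG_j^\dag = (AA^\dag)_{ij}\,G_iG_j^\dag .
\]
This scalar-factors-out identity is exactly the engine behind Theorem \ref{thm3}, and it holds here verbatim, independent of any hypothesis on $A$ or the $C_i$.

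The second step is to read off the surviving blocks from the anti-diagonal hypothesis and then annihilate them. Writing $AA^\dag = \mathrm{adiag}(\lambda_1,\dots,\lambda_s)$, we have $(AA^\dag)_{ij}=0$ whenever $j \neq s+1-i$, so by the identity above $(GG^\dag)_{ij}=[\boldsymbol{0}]$ off the anti-diagonal automatically, with no condition on the codes required. The only potentially nonzero blocks are $(GG^\dag)_{i,\,s+1-i} = \lambda_i\,G_iG_{s+1-i}^\dag$ for $1\le i\le s$. Now I invoke the fact recalled in Section 2, namely $G_1G_2^\dag=[\boldsymbol{0}]$ if and only if $C_1\subseteq C_2^{\perp_H}$: since $s-i+1=s+1-i$, the hypothesis $C_i\subseteq C_{s-i+1}^{\perp_H}$ is precisely $G_iG_{s+1-i}^\dag=[\boldsymbol{0}]$. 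As $i$ ranges over $1,\dots,s$ this kills every anti-diagonal block, so $GG^\dag=[\boldsymbol{0}]$ and hence $C_A\subseteq C_A^{\perp_H}$, as desired.

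I expect no serious obstacle, as the argument is a structural analog of Theorem \ref{thm3}; the only point needing care is the index bookkeeping for the off-diagonal pairing. Unlike the diagonal case, the surviving block $(i,s+1-i)$ pairs $C_i$ with the generally distinct code $C_{s+1-i}$, so "self-orthogonality" is replaced by "mutual Hermitian orthogonality" of paired codes. I would double-check that both transposed anti-diagonal positions are handled: the block at $(s+1-i,\,i)$ requires $G_{s+1-i}G_i^\dag=[\boldsymbol{0}]$, which follows either by applying the hypothesis at the index $s+1-i$ (giving $C_{s+1-i}\subseteq C_i^{\perp_H}$) or, more directly, by taking Hermitian adjoints, since $G_iG_{s+1-i}^\dag=[\boldsymbol{0}]$ forces $G_{s+1-i}G_i^\dag=(G_iG_{s+1-i}^\dag)^\dag=[\boldsymbol{0}]$. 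When $s$ is odd the central index $i=(s+1)/2$ is paired with itself, and the hypothesis correctly degenerates there to requiring that $C_{(s+1)/2}$ be Hermitian self-orthogonal, so the statement remains consistent across all cases.
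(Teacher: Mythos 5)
Your proposal is correct and follows essentially the same route as the paper's proof: expand $GG^\dag$ blockwise so that the $(i,j)$ block is $(AA^\dag)_{ij}G_iG_j^\dag$, observe that the anti-diagonal hypothesis kills all blocks except those at positions $(i,s+1-i)$, and then use $C_i\subseteq C_{s-i+1}^{\perp_H}$ (equivalently $G_iG_{s-i+1}^\dag=[\boldsymbol{0}]$) to annihilate the rest. Your added remarks on the transposed positions and the central index when $s$ is odd are correct bookkeeping that the paper leaves implicit.
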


\begin{proof}
    Assume that $AA^\dag = \mathrm{adiag}(\lambda_{1},\lambda_{2},\dots,\lambda_{s})$ and $ \ C_{i} \subseteq  C_{s-i+1}^{{\perp}_{H}} $ for all  $1\leq i \leq s  $.
    For each $1\leq i\leq s$, let $G_i$ be a generator matrix of the code $C_i$.  Write 
    $A = \begin{bmatrix}
    a_{11} &a_{12}  &\cdots   &a_{1l} \\ 
    a_{21} & a_{22} & \cdots  &a_{2l} \\ 
    \vdots  &  \vdots & \ddots  &\vdots  \\ 
    a_{s1}&a_{s2}  &  \cdots & a_{sl}
    \end{bmatrix}$. Then the matrix-product code $C_A$ is generated by 
    \[G = \begin{bmatrix}
    a_{11}G_{1} &a_{12}G_{1}  &\cdots   &a_{1l}G_{1} \\ 
    a_{21}G_{2} & a_{22}G_{2} & \cdots  &a_{2l}G_{2} \\ 
    \vdots  &  \vdots & \ddots  &\vdots  \\ 
    a_{s1}G_{s}&a_{s2}G_{s}  &  \cdots & a_{sl}G_{s}
    \end{bmatrix}.\]
    It follows that 
    \begin{align*}
    GG^\dag & = \begin{bmatrix}
    a_{11}G_{1} &a_{12}G_{1}  &\cdots   &a_{1l}G_{1} \\ 
    a_{21}G_{2} & a_{22}G_{2} & \cdots  &a_{2l}G_{2} \\ 
    \vdots  &  \vdots & \ddots  &\vdots  \\ 
    a_{s1}G_{s}&a_{s2}G_{s}  &  \cdots & a_{sl}G_{s}
    \end{bmatrix}\begin{bmatrix}
    a_{11}^{r}G_{1}^\dag &a_{21}^{r}G_{2}^\dag  &\cdots   &a_{s1}^{r}G_{s}^\dag \\ 
    a_{12}^{r}G_{1}^\dag & a_{22}^{r}G_{2}^\dag & \cdots  &a_{s2}^{r}G_{s}^\dag \\ 
    \vdots  &  \vdots & \ddots  &\vdots  \\ 
    a_{1l}^{r}G_{1}^\dag&a_{2l}^{r}G_{2}^\dag  &  \cdots & a_{sl}^{r}G_{s}^\dag
    \end{bmatrix}\\ 
    & =   \begin{bmatrix}
    0(G_{1}G_{1}^\dag)  & \cdots & 0(G_{1}G_{s-1}^\dag)  & \lambda_{1}(G_{1}G_{s}^\dag)\\ 
    0(G_{2}G_{1}^\dag)   & \cdots  &\lambda_{2}(G_{2}G_{s-1}^\dag)& 0(G_{2}G_{s}^\dag)\\ 
    \vdots  &    \iddots & \vdots  &\vdots  \\  
    \lambda_{s}(G_{s}G_{1}^\dag)   &  \cdots & 0(G_{s}G_{s-1}^\dag) & 0(G_{s}G_{s}^\dag)
    \end{bmatrix} .
    \end{align*}
    Since $ \ C_{i} \subseteq C_{s-i+1}^{{\perp}_{H}}  $ for all $1\leq i \leq s  $,  we have  $G_{i}G_{s-i+1}^\dag = [\boldsymbol{0}]$  for all $1\leq i \leq s$. 
    Hence, $GG^\dag = [\boldsymbol{0}]$. Therefore,   $C_{A} \subseteq C_{A}^{{\perp}_{H}}$ as desired.
\end{proof}
The following results can be deduced directly from Theorem \ref{thm4}. The proofs are omitted.
\begin{corollary}	 \label{cor415}
    If $A \in M_{s,s}(\mathbb{F}_q)$ is such that $AA^\dag = \lambda J_{s}$ for some non-zero $\lambda$ in $\mathbb{F}_q$ and $C_{i} \subseteq C_{s-i+1}^{{\perp}_{H}}$ for all $1\leq i \leq s$, then $C_{A} \subseteq C_{A}^{{\perp}_{H}}$.	
\end{corollary}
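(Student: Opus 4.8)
The plan is to recognize this corollary as an immediate specialization of Theorem \ref{thm4}. First I would observe that, by the definition given in Section 2, $J_s = \mathrm{adiag}(1,1,\dots,1)$, so that $\lambda J_s = \mathrm{adiag}(\lambda,\lambda,\dots,\lambda)$ is an anti-diagonal matrix. Hence the hypothesis $AA^\dag = \lambda J_s$ is precisely a special case of the assumption ``$AA^\dag$ is anti-diagonal'' appearing in Theorem \ref{thm4}, namely the case where all anti-diagonal entries coincide, $\lambda_1 = \lambda_2 = \dots = \lambda_s = \lambda$.

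Next I would verify that the remaining hypotheses of Theorem \ref{thm4} hold. Since $A \in M_{s,s}(\mathbb{F}_q)$ is square, we have $s = l$, so the constraint $s \leq l$ is satisfied with equality. The stated condition $C_i \subseteq C_{s-i+1}^{\perp_H}$ for all $1 \leq i \leq s$ is verbatim the nesting condition required by Theorem \ref{thm4}. With both the matrix condition and the code condition in place, Theorem \ref{thm4} applies directly and yields $C_A \subseteq C_A^{\perp_H}$, which is the desired conclusion.

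There is essentially no obstacle to overcome here: the corollary follows by substitution once the anti-diagonal form of $\lambda J_s$ is noted, which is exactly why the authors omit the proof. The only point worth remarking is the role of the non-zero hypothesis on $\lambda$. It is not actually needed for the self-orthogonality conclusion itself, since the argument of Theorem \ref{thm4} only uses that $AA^\dag$ is anti-diagonal (and the zero matrix is trivially anti-diagonal). Rather, requiring $\lambda \neq 0$ guarantees that $A$ is weakly anti-quasi-unitary in the sense of Section 2 and, in particular, that $A$ is nonsingular, so that $\dim(C_A) = \sum_{i=1}^{s} k_i$ follows from part (3) of Theorem \ref{thm0}. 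This keeps the statement parallel to the corollary deduced from Theorem \ref{thm3} in the diagonal case and ties it cleanly to the family of quasi-unitary matrices.
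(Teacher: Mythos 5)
Your proof is correct and matches the paper's intent exactly: the paper states that this corollary "can be deduced directly from Theorem \ref{thm4}" and omits the proof, and your argument is precisely that direct specialization (noting $\lambda J_s$ is anti-diagonal and checking the remaining hypotheses). Your closing remark on the role of $\lambda \neq 0$ is a sensible observation but does not change the substance.
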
	
\begin{example} Let $\beta$ be a primitive element of $\mathbb{F}_4$. Then $A = 
    \begin{bmatrix} 
    1  & \beta \\
    \beta & 1
    \end{bmatrix}$ is  invertible,  $AA^\dag = \mathrm{adiag}(1,1)$, $\delta_1(A)=2$, and $\delta_2(A)=1$.  Let $C_1$ and $C_2$ be the linear codes of length $4$ over $\mathbb{F}_4$ generated by \[G_{1} = 
    \begin{bmatrix} 
    1& 1 & 1 & 1\\
    0 & 0 & \beta & \beta \\
    \end{bmatrix} ~\text{ and }
    ~G_2 =\begin{bmatrix}         1 & 1& 1&  1 
    \end{bmatrix} ,\]     respectively.  Then $C_1$ and $C_2$ have  parameters $[4,2,2]_4$ and  $[4,1,4]_4$, respectively. Since $C_2\subseteq C_1\subseteq C_2^{{\perp}_{H}}$, 
    by Theorem \ref{thm0} and Corollary \ref{cor415}, $C_A$ is a Hermitian self-orthogonal code with parameters $[8,3,4]_4$.
\end{example}
By choosing $C_{i} = {C}^{{\perp}_{H}}_{s-i+1}$ in Corollary \ref{cor415}, we have the following result.

\begin{corollary}	\label{cor3.3}
    If $A \in M_{s,s}(\mathbb{F}_q)$ is  such that  $AA^\dag = \lambda J_{s}$  for some non-zero $\lambda$ in $\mathbb{F}_q$ and $C_{i} = C_{s-i+1}^{{\perp}_{H}}$ for all $1\leq i\leq s$, then $C_{A}$ is Hermitian self-dual.	
\end{corollary}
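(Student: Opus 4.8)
The plan is to deduce Corollary \ref{cor3.3} as a special case of Corollary \ref{cor415}, so the main work is to verify that the chosen input codes satisfy the hypotheses of that corollary and that the resulting self-orthogonal containment is in fact an equality. First I would fix the input codes to be $C_i = C_{s-i+1}^{\perp_H}$ for all $1 \leq i \leq s$, exactly as the statement instructs, and I would note at the outset the elementary fact that $(C^{\perp_H})^{\perp_H} = C$ for any linear code $C$ over $\mathbb{F}_q$. This double-dual identity is what makes the symmetric substitution consistent and will be used repeatedly.

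Next I would check that the pairing hypothesis of Corollary \ref{cor415}, namely $C_i \subseteq C_{s-i+1}^{\perp_H}$ for all $i$, is satisfied by this choice. With $C_i = C_{s-i+1}^{\perp_H}$, the required containment reads $C_{s-i+1}^{\perp_H} \subseteq C_{s-i+1}^{\perp_H}$, which holds trivially with equality. Since $A$ satisfies $AA^\dagger = \lambda J_s$ by assumption, all hypotheses of Corollary \ref{cor415} are met, so I may immediately conclude $C_A \subseteq C_A^{\perp_H}$; that is, $C_A$ is Hermitian self-orthogonal. The remaining task is to upgrade this containment to the equality $C_A = C_A^{\perp_H}$ characterizing self-duality.

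To get equality I would argue by dimensions. Since $A$ is quasi-unitary with $AA^\dagger = \lambda J_s$ and $\lambda \neq 0$, the matrix $A$ is nonsingular, hence full-row-rank, so by Theorem \ref{thm0}(3) we have $\dim(C_A) = \sum_{i=1}^s \dim(C_i) = \sum_{i=1}^s \dim(C_{s-i+1}^{\perp_H})$. Reindexing the sum and using $\dim(C^{\perp_H}) = ml - \dim(C)$ for a code of length $ml$ (where $m$ is the common length of the $C_i$), the total dimension works out to $s \cdot ml/2$, which is exactly half the length $ml \cdot s / s = ml$ of $C_A$; more precisely, pairing the $i$-th and $(s-i+1)$-th summands shows $\dim(C_i) + \dim(C_{s-i+1}) = ml$, whence $\sum_i \dim(C_i) = sml/2$. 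A code of length $sm$ (the length of $C_A$) that is Hermitian self-orthogonal and has dimension equal to half its length must be Hermitian self-dual, so $C_A = C_A^{\perp_H}$.

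I expect the main obstacle to be the bookkeeping of lengths and dimensions in the dimension count: one must be careful that the length of $C_A$ is $sm$ while each input code has length $m$, so the self-dual condition $\dim(C_A) = sm/2$ has to be matched against $\sum_i \dim(C_i)$ where each $\dim(C_{s-i+1}^{\perp_H}) = m - \dim(C_{s-i+1})$. The pairing trick above sidesteps any need to know the individual dimensions $k_i$ and cleanly yields the half-length total, which I believe is the cleanest route; the only subtlety is confirming nonsingularity of $A$ from $AA^\dagger = \lambda J_s$, which follows since $J_s$ is invertible and $\lambda \neq 0$ force $\det(A) \neq 0$.
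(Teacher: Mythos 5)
Your proposal is correct and follows the paper's intended route: the paper derives this corollary by simply choosing $C_i = C_{s-i+1}^{{\perp}_H}$ in Corollary \ref{cor415}, and your dimension count via Theorem \ref{thm0}(3) supplies exactly the half-length argument the paper leaves implicit. The only blemish is a notational slip where you write $ml$ for the common length of the input codes (which is $m$; the length of $C_A$ is $ms$), but the pairing identity $\dim(C_i)+\dim(C_{s-i+1})=m$ and the conclusion $\dim(C_A)=sm/2$ are clearly what you intend, and the argument goes through.
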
	
\section{Special Matrices and Applications}	

As discussed  in Section 3,     a   full-row-rank  matrix $A\in M_{s,l}(\mathbb{F}_q)$ with the property that $AA^{\dag}$ is diagonal or  anti-diagonal  is required in the  matrix-product construction for Hermitian self-orthogonal codes.   From Theorem \ref{thm0},  the minimum Hamming weight of the output code  depends on the sequence   $\{\delta_i(A) \}_{i=1,2,\dots,s}$. In most cases,  the output code has large  minimum Hamming weight if the sequence   $\{\delta_i(A) \}_{i=1,2,\dots,s}$ is decreasing.

In this section,  a certification for  the existence of   {weakly quasi-unitary} and  {weakly anti-quasi-unitary}  matrices $A$ over some finite fields with the property that  $\{\delta_i(A) \}_{i=1,2,\dots,s}$ is a decreasing sequence.  Precise applications of such matrices in constructing Hermitian self-orthogonal codes are explained.

\subsection{Weakly Quasi-Unitary Matrices}  In this subsection, some weakly quasi-unitary matrices with   the property that the sequence $\{\delta_i(A) \}_{i=1,2,\dots,s}$ is decreasing are  given as well as their applications in a matrix-product construction of Hermitian self-orthogonal codes.

First, we consider  $2\times 2$ (weakly) quasi-unitary  matrices over  an arbitrary finite field of square order greater than $4$. 
\begin{lemma}\label{lem421}
    Let $r$ be  a prime power and $q=r^2$. Let $ \alpha$ be a primitive element of $\mathbb{F}_{q}$.  Then one of  the following statements holds.
    \begin{enumerate}
        \item If $q$ is odd, then $A=
        \left[\begin{array}{cc}
        1 & 1\\
        1 & -1
        \end{array}\right] \in M_{2,2}(\mathbb{F}_q)$ is   invertible and  (weakly) quasi-unitary  with $\delta_1(A)=2$ and $\delta_2(A)=1$.
        \item If $q \geq 16$ is even, then $A = \begin{bmatrix}
        1  & \alpha \\ 
        \alpha^r  & 1 
        \end{bmatrix}  \in M_{2,2}(\mathbb{F}_q)$	is invertible and (weakly) quasi-unitary with $\delta_1(A)=2$ and  $\delta_2(A)=1$.  
    \end{enumerate}
\end{lemma}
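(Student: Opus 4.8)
The plan is to verify the three required properties for each matrix separately: invertibility, the (weakly) quasi-unitary condition $AA^{\dag}=\lambda I_2$, and the values $\delta_1(A)=2$, $\delta_2(A)=1$. Recall that $A^{\dag}$ is obtained by transposing $A$ and applying the Frobenius map $a\mapsto a^r$ entrywise, so in both cases I compute $AA^{\dag}$ directly from the definition.

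For statement (1), with $A=\left[\begin{smallmatrix}1&1\\1&-1\end{smallmatrix}\right]$ over odd $q$, note that conjugation $\overline{a}=a^r$ fixes the prime subfield, so $1^r=1$ and $(-1)^r=-1$; hence $A^{\dag}=A^{\top}=A$ and $AA^{\dag}=\mathrm{diag}(2,2)=2I_2$. Since $q$ is odd, $2\neq 0$, so $A$ is quasi-unitary with $\lambda=2$, and $\det A=-2\neq 0$ gives invertibility. For statement (2), with $A=\left[\begin{smallmatrix}1&\alpha\\ \alpha^r&1\end{smallmatrix}\right]$ over even $q\geq 16$, I compute each entry of $AA^{\dag}$: the $(1,1)$ entry is $1\cdot 1+\alpha\cdot\alpha^r=1+\alpha^{r+1}$, the $(2,2)$ entry is $\alpha^r\cdot(\alpha^r)^r+1\cdot 1=\alpha^{r(r+1)}+1$, and the off-diagonal $(1,2)$ entry is $1\cdot(\alpha^r)^r+\alpha\cdot 1=\alpha^{r^2}+\alpha=\alpha^q+\alpha$. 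In characteristic $2$ we have $\alpha^q=\alpha^{r^2}=\alpha$ (as $\alpha\in\mathbb{F}_q$), so the off-diagonal entries vanish, and similarly $\alpha^{r(r+1)}=(\alpha^{r+1})^r=\alpha^{r+1}$ since $\alpha^{r+1}\in\mathbb{F}_r$ is fixed by the $r$-power map; thus $AA^{\dag}=(1+\alpha^{r+1})I_2=\lambda I_2$ with $\lambda=1+\alpha^{r+1}$. The invertibility follows from $\det A=1-\alpha^{r+1}=1+\alpha^{r+1}$ (characteristic $2$), which equals $\lambda$.

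The step I expect to require the most care is showing $\lambda=1+\alpha^{r+1}\neq 0$ in the even case, which is exactly the point where the hypothesis $q\geq 16$ enters. Since $\alpha$ is a primitive element, $\alpha^{r+1}$ is a generator of the norm subgroup $\mathbb{F}_r^{\times}$ (of order $r-1$), so $\alpha^{r+1}=1$ would force $r-1\mid r+1$, i.e. $r-1\mid 2$, giving $r\in\{2,3\}$; for even $q$ this means $r=2$, i.e. $q=4$. The assumption $q\geq 16$ (equivalently $r\geq 4$) therefore guarantees $\alpha^{r+1}\neq 1$, so $\lambda=1+\alpha^{r+1}\neq 0$ and $A$ is genuinely quasi-unitary and invertible; this is precisely why $q=4$ must be excluded.

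Finally I compute the $\delta_i(A)$. By definition $\delta_1(A)$ is the minimum weight of the code generated by the first row of $A$: in both cases the first row is $(1,1)$ or $(1,\alpha)$, each of full weight $2$, and a nonzero scalar multiple cannot create a zero entry, so $\delta_1(A)=2$. For $\delta_2(A)$, the code generated by both rows of $A$ is all of $\mathbb{F}_q^2$ (since $A$ is invertible), which contains weight-$1$ vectors such as $(1,0)$, giving $\delta_2(A)=1$. This establishes that the sequence $\{\delta_i(A)\}$ is decreasing, as claimed, and completes the verification in both cases.
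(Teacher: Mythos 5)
Your proof is correct and follows essentially the same route as the paper: compute $AA^{\dag}$ entrywise, getting $2I_2$ in the odd case and $(1+\alpha^{r+1})I_2$ in the even case, and read off $\delta_1,\delta_2$. You are in fact more careful than the paper, which asserts invertibility with ``clearly'' and never checks $1+\alpha^{r+1}\neq 0$; your order-of-$\alpha^{r+1}$ argument is exactly the missing justification for the hypothesis $q\geq 16$ (the paper's proof even mistakenly writes ``$q>2$'').
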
	

\begin{proof}
    To prove $(1)$, assume that $q$ is odd. Let $A=
    \left[\begin{array}{cc}
    1 & 1\\
    1 & -1
    \end{array}\right]$. Clearly, $A$ is invertible, $\delta_{1}(A) = 2$ and $\delta_{2}(A) = 1$. Since
    $$AA^{T} = \begin{bmatrix}
    1 & 1\\ 
    1 & -1  
    \end{bmatrix} \begin{bmatrix}
    1 & 1  \\ 
    1 & -1  
    \end{bmatrix}  
    = \begin{bmatrix}
    2 & 0  \\ 
    0 & 2  
    \end{bmatrix} = \mathrm{diag}(2,2),$$
    $A$ is (weakly) quasi-unitary.
    
   To prove  $(2)$,  assume that $q > 2$ is even. Let $A = \begin{bmatrix}
    1  & \alpha \\ 
    \alpha^r  & 1 
    \end{bmatrix}$. Clearly, $A$ is invertible, $\delta_1(A)=2$ and  $\delta_2(A)=1$.
    Since 
    \begin{align*}
    AA^\dag & = \begin{bmatrix}
    1  & \alpha \\ 
    \alpha^r  & 1 
    \end{bmatrix} \begin{bmatrix}
    1  & \alpha^{r^2} \\ 
    \alpha^r  & 1 
    \end{bmatrix} \\
    & = \begin{bmatrix}
    1+\alpha^{r+1} & \alpha^{r^2}+\alpha \\ 
    \alpha^r +  \alpha^r & 1+\alpha^{r+1} 
    \end{bmatrix}\\
    & = \begin{bmatrix}
    1+\alpha^{r+1}  & 0 \\ 
    0  & 1+\alpha^{r+1} 
    \end{bmatrix}\\ 
    & = \mathrm{diag}(1+\alpha^{r+1},1+\alpha^{r+1}),
    \end{align*}
    A is (weakly) quasi-unitary.
\end{proof}	

\begin{remark} We note that for every  $2\times 2$ matrix $A=\begin{bmatrix}
    a  & b \\ 
    c  & d
    \end{bmatrix}$ over $\mathbb{F}_4$,  if $\delta_1(A)=2$, then $a$ and $b$ are non-zeros.    Hence, the top-left conner of $AA^\dagger$ is $a^3+b^3=1+1=0$.    Hence, $A$ cannot be   weakly quasi-unitary. Therefore, there are no   weakly  quasi-unitary matrices in $M_{2,2}(\mathbb{F}_4)$ with $\delta_1(A)=2$ and $\delta_2(A)=1$.

\end{remark}
Quasi-unitary matrices in Lemma \ref{lem421} can be applied to construct Hermitian self-orthogonal codes as follows.
\begin{corollary}\label{cor09} Let $r\geq 3$  be a prime power and let $q=r ^2$.   If there exist Hermitian self-orthogonal $[m,k_1,d_1]_q$ and  $[m,k_2,d_2]_q$ codes, then a Hermitian self-orthogonal $[2m,k_1+k_2, d]_q$ code can be constructed with $d\geq \min\{2d_1,d_2\}$.
\end{corollary}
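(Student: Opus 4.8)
The plan is to assemble the statement from machinery already in place: the explicit matrices of Lemma~\ref{lem421}, the self-orthogonality criterion of Theorem~\ref{thm3}, and the dimension and distance estimates of Theorem~\ref{thm0}. The corollary is essentially a packaging of these three facts for the case $s=2$, $l=2$.

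First I would pin down the matrix. Since $r\geq 3$ is a prime power, either $r$ is odd, in which case $q=r^2$ is odd, or $r$ is a power of $2$, in which case $r\geq 4$ and so $q=r^2\geq 16$ is even. Thus exactly one of the two alternatives of Lemma~\ref{lem421} applies, and in either case it furnishes an invertible matrix $A\in M_{2,2}(\mathbb{F}_q)$ that is (weakly) quasi-unitary, say $AA^\dagger=\lambda I_2$ with $\lambda\neq 0$, and that satisfies $\delta_1(A)=2$ and $\delta_2(A)=1$. This is precisely where the hypothesis $r\geq 3$ enters: it rules out $q=4$, where the preceding Remark shows no such matrix can exist.

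Next I would form the matrix-product code $C_A=[C_1,C_2]\cdot A$, where $C_1$ and $C_2$ are the given Hermitian self-orthogonal codes with parameters $[m,k_1,d_1]_q$ and $[m,k_2,d_2]_q$; its length is $m\cdot 2=2m$. Since $AA^\dagger=\lambda I_2$ is in particular diagonal and each $C_i$ is Hermitian self-orthogonal, Theorem~\ref{thm3} gives $C_A\subseteq C_A^{\perp_H}$. Because $A$ is invertible it is full-row-rank, so Theorem~\ref{thm0}(3) yields $\dim(C_A)=k_1+k_2$. Finally, Theorem~\ref{thm0}(4) gives
\[
 d(C_A)\geq \min\{d_1\delta_1(A),\,d_2\delta_2(A)\}=\min\{2d_1,\,d_2\},
\]
which is exactly the claimed bound.

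I do not expect any analytic obstacle; the whole argument is bookkeeping layered on top of previously established results. The one point deserving care is the case split on $q$: I must confirm that $r\geq 3$ forces the two alternatives of Lemma~\ref{lem421} to be jointly exhaustive, i.e.\ that an even square order $q$ with $r\geq 3$ is automatically at least $16$. Note also that I claim only the lower bound $d\geq\min\{2d_1,d_2\}$, so I never invoke the nesting $C_1\supseteq C_2$ needed for equality in Theorem~\ref{thm0}(5); the inequality in Theorem~\ref{thm0}(4) is all that is required.
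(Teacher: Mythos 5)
Your proposal is correct and follows essentially the same route as the paper: invoke Lemma~\ref{lem421} to obtain an invertible (weakly) quasi-unitary $2\times 2$ matrix with $\delta_1(A)=2$, $\delta_2(A)=1$, then apply Theorems~\ref{thm0} and~\ref{thm3} to get the self-orthogonality, dimension, and distance bound. Your explicit verification that $r\geq 3$ makes the two alternatives of Lemma~\ref{lem421} exhaustive (odd $q$, or even $q=r^2\geq 16$) is a detail the paper leaves implicit, but the argument is the same.
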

\begin{proof}
    Assume that there exist Hermitian self-orthogonal codes $C_1$ and $C_2$ with parameters  $[m,k_1,d_1]_q$ and  $[m,k_2,d_2]_q$. By Lemma \ref{lem421}, there exist  a $2\times 2$   invertible and (weakly) quasi-unitary  matrix $A$ over $\mathbb{F}_q$ with $\delta_1(A)=2$ and $\delta_2(A)=1$. By Theorems \ref{thm0} and \ref{thm3}, the matrix-product code $C_{A}$ is Hermitian self-orthogonal  with parameters $[2m,k_1+k_2, d]_q$   with $d\geq \min\{2d_1,d_2\}$. 
\end{proof} 
\begin{example} Let $\beta$ be a primitive element of $\mathbb{F}_{16}$. By Lemma \ref{lem421}, we have that $A = 
    \begin{bmatrix} 
    1  & \beta\\ 
    \beta^2  & 1 
    \end{bmatrix}$ is  invertible,  {$AA^\dag = \mathrm{diag}(1+ \beta^{5},1+ \beta^{5})$, } $\delta_1(A)=2$ and $ \delta_2(A)=1$.  Let $C_1$ and $ C_2$  be the linear codes of length $4$ over $\mathbb{F}_{16}$ generated by 
    \[G_1 =\begin{bmatrix}  	1& 1 & 1 & 1 \\
    1 & 0 & 1 & 0   
    \end{bmatrix} ,\]  
    and 	\[G_{2} = 
    \begin{bmatrix} 
    1& 1 & 1 & 1\\
    \end{bmatrix},\]
    respectively.  Then $C_1$ and $ C_2$ are Hermitian self-orthogonal with parameters $[4, 2, 2]_{16}$ and $ [4,2,1]_{16}$  respectively. Since $C_2\subseteq C_1$, 
    by Theorem \ref{thm0} and Corollary \ref{cor09}, $C_A$ is a Hermitian self-orthogonal code with parameters $[8,3,4]_{16}$.
\end{example}

Next,  we focus on $M \times M$  (weakly) quasi-unitary matrices  over $\mathbb{F}_{q}$,  where $M\geq 2$ is an integer. 

\begin{lemma}	\label{Lem44} Let $r$ be a prime power and $q=r^2$.
    Let $M$ be a positive integer.  If $M | (r+1)$, then there exists an $M \times M$  (weakly) quasi-unitary  matrix over $\mathbb{F}_{q}$ with $\delta_i(A)=M-i+1$ for all $1 \leq i \leq M$.  
\end{lemma}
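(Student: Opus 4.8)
The plan is to produce an explicit matrix, namely a Vandermonde matrix evaluated at the $M$-th roots of unity in $\mathbb{F}_q$, and then verify the distance profile and the (weak) quasi-unitarity separately. First I would observe that since $M \mid (r+1)$ and $(r+1)\mid(r^2-1)=q-1$, the cyclic group $\mathbb{F}_q^{\ast}$ contains a unique subgroup $\mu_M=\{\zeta^{0},\zeta^{1},\dots,\zeta^{M-1}\}$ of order exactly $M$, where $\zeta$ is a primitive $M$-th root of unity. Enumerate its elements as $\alpha_1,\dots,\alpha_M$ and set $A=[\alpha_j^{i-1}]_{1\le i,j\le M}$. As a Vandermonde matrix on $M$ distinct nodes, $A$ is invertible, so its rows are linearly independent and the first $i$ of them span an $i$-dimensional code.

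For the distance profile I would identify, for each $i$, the code generated by the first $i$ rows of $A$ with the evaluation (Reed--Solomon) code $\{(f(\alpha_1),\dots,f(\alpha_M)) : \deg f < i\}$. A nonzero polynomial of degree at most $i-1$ vanishes at at most $i-1$ of the $M$ distinct nodes, so every nonzero codeword has weight at least $M-(i-1)$; the Singleton bound forces equality, giving $\delta_i(A)=M-i+1$ for all $1\le i\le M$, which is the required decreasing sequence.

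The core computation is $AA^{\dagger}$. Writing out the $(i,k)$ entry and using $\overline{a}=a^{r}$, I get $(AA^{\dagger})_{ik}=\sum_{j=1}^{M}\alpha_j^{i-1}\overline{\alpha_j^{k-1}}=\sum_{x\in\mu_M}x^{(i-1)+r(k-1)}$. Here the hypothesis enters decisively: $M\mid(r+1)$ means $r\equiv -1 \pmod{M}$, so the exponent collapses to $(i-1)-(k-1)\equiv i-k \pmod{M}$. I would then invoke the standard geometric-sum identity over a group of roots of unity: $\sum_{x\in\mu_M}x^{e}$ equals $0$ when $M\nmid e$ and equals $M$ (viewed in $\mathbb{F}_q$) when $M\mid e$. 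Since $|i-k|<M$, the exponent is divisible by $M$ precisely when $i=k$, so $AA^{\dagger}$ is diagonal with every diagonal entry equal to $M\cdot 1$.

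The last point to settle, and where I expect the only genuine subtlety, is that this common diagonal value is nonzero, so that $A$ is honestly (weakly) quasi-unitary rather than degenerate. For this I would use that $r$ is a power of the characteristic $p$, so $r+1\equiv 1 \pmod{p}$ and hence $p\nmid(r+1)$; since $M\mid(r+1)$, also $p\nmid M$, whence $M\neq 0$ in $\mathbb{F}_q$. Thus $AA^{\dagger}=M\,I_M=\lambda I_M$ with $\lambda=M\neq 0$, which is in fact stronger than the stated requirement. The main obstacle is really the simultaneous bookkeeping inside the character sum: the single hypothesis $M\mid(r+1)$ must both annihilate the off-diagonal terms (through $r\equiv -1$) and, via the derived fact $p\nmid M$, keep the diagonal terms alive.
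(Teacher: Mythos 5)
Your proposal is correct and follows essentially the same route as the paper: the same Vandermonde matrix on the $M$-th roots of unity, the same character-sum computation of $AA^{\dagger}$ using $r\equiv -1\pmod M$, and the same conclusion $AA^{\dagger}=MI_M$. The only differences are that you prove $\delta_i(A)=M-i+1$ directly via the Reed--Solomon evaluation argument where the paper cites \cite[Theorem 3.2]{BN2001}, and you spell out why $M\neq 0$ in $\mathbb{F}_q$ (via $p\nmid (r+1)$), a point the paper asserts without justification.
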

\begin{proof} 
    Assume that $M | (r+1)$. Then $\mathbb{F}_{q}$ contains a primitive $M$-th root unity. Let $\alpha$ be a fixed primitive $M$-th root unity in $\mathbb{F}_{q}$. Since $r \equiv  -1 \,\mathrm{mod}\, M$, we have
    $$ \overline{\alpha} = \alpha^{r} =\alpha^{-1}.$$
    Define\\
    $$ A =\begin{bmatrix}
    (\alpha^0)^{0} & (\alpha^1)^{0} & \cdots  & (\alpha^{M-1})^{0} \\ 
    (\alpha^0)^{1}& (\alpha^1)^{1} & \cdots  & (\alpha^{M-1})^{1} \\ 
    \vdots & \vdots  &  \ddots  & \vdots  \\ 
    (\alpha^0)^{M-1}&  (\alpha^1)^{M-1}& \cdots   & (\alpha^{M-1})^{M-1} 
    \end{bmatrix}.$$
    Let $B = AA^\dag$. Then, for all $1 \leq i,j \leq M$, we have
    \begin{align*}
    b_{ij} & = \sum_{k=0}^{M-1}(\alpha^{k})^{i-1} \overline{(\alpha^{k})^{j-1}} =  \sum_{k=0}^{M-1}(\alpha^{k})^{i-1} \overline{(\alpha^{k})}^{j-1}\\
    & = \sum_{k=0}^{M-1}(\alpha^{k})^{i-1}(\alpha^{-k})^{j-1} = \sum_{k=0}^{M-1}(\alpha^{i-j})^{k}\\
    & =  \left\{\begin{matrix}
    M\neq 0 & \text {if $i = j$}, \\ 
    0 & \ \ \ \ \ \text{if otherwise.}
    \end{matrix}\right.
    \end{align*}
    Hence, $AA^\dag = \mathrm{diag}(M,M,\dots, M)=MI_M$. Therefore, A is (weakly) quasi-unitary. From {\cite[Theorem 3.2]{BN2001}},  it follows that  $\delta_i(A)=M-i+1$ for all $1 \leq i \leq M$. 
\end{proof}	
\begin{corollary}\label{cor011}  Let $r$ be a prime power and $q=r^2$. Let  $M$ be a  positive integer such that $M | (r+1)$.  If there exist Hermitian self-orthogonal $[m,k_1,d_1]_q$, $[m,k_2,d_2]_q , \dots,[m,k_M,d_M]_q$ codes, then a Hermitian self-orthogonal $[Mm,k_1+k_2+\cdots+k_M, d]_q$ code can be constructed with $d\geq \min\{Md_1,(M-1)d_2, \dots,d_M\}$.
\end{corollary}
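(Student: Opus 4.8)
The plan is to assemble the existence of the required matrix from Lemma~\ref{Lem44} and then feed it, together with the hypothesized input codes, into the matrix-product machinery of Theorems~\ref{thm0} and~\ref{thm3}. Concretely, since $M \mid (r+1)$, Lemma~\ref{Lem44} furnishes an $M\times M$ (weakly) quasi-unitary matrix $A$ over $\mathbb{F}_q$, that is, $AA^\dag = MI_M$, with the prescribed decreasing sequence $\delta_i(A)=M-i+1$ for all $1\le i\le M$. The first thing I would verify is that $A$ qualifies for Theorem~\ref{thm3}: its hypothesis only asks that $AA^\dag$ be diagonal, and $MI_M$ is certainly diagonal, so the self-orthogonality conclusion will apply provided each input code is Hermitian self-orthogonal. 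I would note in passing that $M$ is nonzero in $\mathbb{F}_q$ because $M\mid(r+1)$ forces $M$ to be coprime to the characteristic, so $A$ is in fact invertible (equivalently full-row-rank), which will matter for the dimension count.

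Next I would set $C_1,C_2,\dots,C_M$ to be the given Hermitian self-orthogonal codes with parameters $[m,k_i,d_i]_q$. Applying Theorem~\ref{thm3} with $s=l=M$ and this diagonal $AA^\dag$ immediately yields $C_A\subseteq C_A^{\perp_H}$, i.e.\ the output code is Hermitian self-orthogonal, which is the qualitative part of the claim. For the parameters, Theorem~\ref{thm0}(1) gives that $C_A$ has length $Mm$, and Theorem~\ref{thm0}(3), together with the full-row-rank property of $A$ just established, gives $\dim(C_A)=\sum_{i=1}^M k_i=k_1+k_2+\cdots+k_M$. This settles the length and dimension entries of the claimed $[Mm,k_1+\cdots+k_M,d]_q$ parameters.

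The remaining point is the minimum-distance bound. Here I would invoke Theorem~\ref{thm0}(4), which gives $d=\mathrm{d}(C_A)\ge \min_{1\le i\le M}\{d_i\,\delta_i(A)\}$, and substitute the values $\delta_i(A)=M-i+1$ supplied by Lemma~\ref{Lem44}. This produces exactly $d\ge\min\{Md_1,(M-1)d_2,\dots,d_M\}$, matching the statement. I would therefore assemble the argument as: (i) obtain $A$ from Lemma~\ref{Lem44} and record $AA^\dag=MI_M$ diagonal with $A$ invertible; (ii) apply Theorem~\ref{thm3} for self-orthogonality; (iii) apply Theorem~\ref{thm0}(1),(3) for length and dimension; (iv) apply Theorem~\ref{thm0}(4) with the $\delta_i$ values for the distance bound.

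Honestly, there is no substantive obstacle here—the corollary is a direct packaging of the two preceding theorems and the lemma, so the proof is essentially bookkeeping. The only mild subtlety worth a sentence is the need to record that $A$ is full-row-rank (invertible) so that the dimension equals the full sum $\sum k_i$ rather than merely being bounded above by it via Theorem~\ref{thm0}(2); this is where I would be careful, since without invertibility one would only get an upper bound on the dimension. Beyond that, the matching of $\delta_i(A)$ to the coefficients $M,M-1,\dots,1$ in the distance bound is the one place to double-check indexing, but it is immediate from the lemma.
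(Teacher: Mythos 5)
Your proposal is correct and follows essentially the same route as the paper: obtain the quasi-unitary matrix $A$ with $\delta_i(A)=M-i+1$ from Lemma~\ref{Lem44}, then combine Theorems~\ref{thm0} and~\ref{thm3} for the self-orthogonality, length, dimension, and distance bound. Your added remarks on the invertibility of $A$ (needed so that Theorem~\ref{thm0}(3) gives the exact dimension) merely make explicit a point the paper leaves implicit.
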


\begin{proof} \
    Assume that there are $M$ Hermitian self-orthogonal   codes with parameters $[m,k_1,d_1]_q$,  $[m,k_2,d_2]_q, \dots, [m,k_M,d_M]_q$. By Lemma \ref{Lem44}, there exist an  $M \times M$    invertible and quasi-unitary matrix $A$ over $\mathbb{F}_q$ with $\delta_1(A)=M, \delta_2(A)=(M-1), \dots ,\delta_M(A) =1$.  By Theorems \ref{thm0} and \ref{thm3}, the  matrix-product code  $C_{A}$ is  Hermitian self-orthogonal with parameters $[Mm,k_1+k_2+\cdots+k_M, d]_q$  with $d\geq \min\{Md_1,(M-1)d_2, \dots, d_M\}$. 
\end{proof}

\begin{example}
    Let $\alpha$ be a primitive element of $\mathbb{F}_4$. Then $\alpha$ is primitive $3$th root unity in $\mathbb{F}_4$. By Lemma \ref{Lem44}, it follows  that $A =\begin{bmatrix}
    1 & 1 & 1   \\ 
    1 & \alpha & \alpha^2 \\ 
    1 & \alpha^2  &  \alpha^4
    \end{bmatrix}$ is  invertible,  $AA^\dag = \mathrm{diag}(1,1,1)$, $\delta_1(A)=3,  \delta_2(A)=2 $ and $\delta_1(A)=1$. Let $C_1$ ,$C_2$ and $C_3$ be the linear codes of length $6$ over $\mathbb{F}_4$ generated by \[G_{1} = 
    \begin{bmatrix} 
    1& 1 & 1 & 1& 1 & 1\\
    0 & 0& 1& 1& \alpha & \alpha \\
    0 & 0& 0& 0& 1 & 1\\
    \end{bmatrix},\]   \[G_{2} = 
    \begin{bmatrix} 
    1& 1 & 1 & 1& 1 & 1\\
    0 & 0& 1& 1& \alpha  & \alpha \\
    \end{bmatrix}\]
    and 
    \[G_3 =\begin{bmatrix}         1 & 1& 1&  1 & 1& 1
    \end{bmatrix} ,\]    respectively.  Then  $C_3\subseteq C_2\subseteq C_1$ are Hermitian self-orthogonal with parameters $[6,3,2]_4, [6,2,4]_{4}$ and  $[6,1,6]_4$, respectively.  By Theorems \ref{thm0}  and \ref{cor011} $C_A$ is a Hermitian self-orthogonal code with parameters $[18,6,6]_4$.
\end{example}

\subsection{Weakly Anti-Quasi-Unitary Matrices}
In this subsection, we focus on the existence of weakly anti-quasi-unitary matrices with   the property that the sequence $\{\delta_i(A)\}_{i=1,2,\dots,s}$ is decreasing.   Their applications in  constructing Hermitian self-orthogonal codes are discussed as well.

In a finite field $\mathbb{F}_{q}$  with  $q = r^2$, the norm function $N : \mathbb{F}_{q} \rightarrow \mathbb{F}_{r}$ is defined by $N(\alpha) = \alpha^{r+1}$ for all $\alpha $ in $\mathbb{F}_{q}$.  In \cite[p. 57]{LN1997}, it has been shown that $N$ is surjective. Hence,  the following lemma and corollaries can be deduced.

\begin{lemma} \label{lem9} Let $r$ be a prime power and $q=r^2$. 
    Let $\alpha$ be a primitive element of $\mathbb{F}_{q}$.  Then the following statements hold.
    \begin{enumerate}
        \item If $q$ is odd, then there exists $b \in \mathbb{F}_{q}$ such that  $b^{r+1} = -1$ and $A=
        \left[\begin{array}{cc}
        1 & b\\
        b & 1
        \end{array}\right]$ is   invertible and (weakly) anti-quasi-unitary with $\delta_1(A)=2$ and $\delta_2(A)=1$.
        \item If $q\geq 4$ is even, then $A = \begin{bmatrix}
        \alpha   & \alpha^r \\ 
        1  & 1 
        \end{bmatrix}$	is invertible and (weakly) anti-quasi-unitary with $\delta_1(A)=2$ and  $\delta_2(A)=1$.  
    \end{enumerate}		
\end{lemma}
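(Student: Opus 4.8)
The plan is to treat the two parities separately, using the surjectivity of the norm map $N(\alpha)=\alpha^{r+1}$ recalled just above the statement. In both cases the core computation is the same: for a matrix of the form $A=\left[\begin{smallmatrix} x & y \\ z & w \end{smallmatrix}\right]$, the product $AA^\dagger$ has $(1,2)$-entry $x\overline{y}+y\overline{w}$ and $(2,1)$-entry $z\overline{x}+w\overline{z}$, while the diagonal entries are $x\overline{x}+y\overline{y}$ and $z\overline{z}+w\overline{w}$. To show $A$ is weakly anti-quasi-unitary I must force the two diagonal entries of $AA^\dagger$ to vanish while keeping the anti-diagonal entries nonzero, and to show $A$ is invertible with $\delta_1(A)=2,\ \delta_2(A)=1$ I must check $\det A\neq 0$ and that the first row of $A$ has full Hamming weight $2$ whereas a suitable row combination has weight $1$.

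For part $(1)$, assume $q$ is odd. First I would invoke surjectivity of $N$ to produce $b\in\mathbb{F}_q$ with $b^{r+1}=-1$, i.e. $b\overline{b}=-1$; such $b$ exists because $-1\in\mathbb{F}_r$ lies in the image of $N$. With $A=\left[\begin{smallmatrix}1 & b\\ b & 1\end{smallmatrix}\right]$, the diagonal entries of $AA^\dagger$ are $1+b\overline{b}=1+(-1)=0$, and the anti-diagonal entries are $\overline{b}+b\neq 0$ (one checks this is nonzero, e.g. because $b\neq 0$ and, since $-1$ is not a square issue need not arise here — the entries are $b+\overline b$, which I would argue is nonzero in odd characteristic from $b\overline b=-1$). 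Thus $AA^\dagger=\mathrm{adiag}(\lambda_1,\lambda_2)$ with $\lambda_i\neq 0$. Invertibility follows from $\det A=1-b^2\neq 0$ (using $b^2\neq 1$, which holds since $b\overline b=-1$ forces $b\neq\pm 1$), and $\delta_1(A)=2,\ \delta_2(A)=1$ are read off directly since the first row is $(1,b)$ with $b\neq 0$ and the code spanned by both rows contains a weight-one word.

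For part $(2)$, assume $q\geq 4$ is even, take $\alpha$ primitive, and set $A=\left[\begin{smallmatrix}\alpha & \alpha^r\\ 1 & 1\end{smallmatrix}\right]$. Here I would compute $AA^\dagger$ directly: the $(1,1)$-entry is $\alpha\overline{\alpha}+\alpha^r\overline{\alpha^r}=\alpha^{r+1}+\alpha^{r+1}=0$ in characteristic $2$, the $(2,2)$-entry is $1+1=0$, while the anti-diagonal entries are $\alpha\cdot 1+\alpha^r\cdot 1=\alpha+\alpha^r$ and its conjugate, which are nonzero because $\alpha\neq\alpha^r$ (as $\alpha\notin\mathbb{F}_r$ for $\alpha$ primitive). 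Hence $AA^\dagger$ is anti-diagonal with nonzero anti-diagonal. Invertibility comes from $\det A=\alpha-\alpha^r\neq 0$ by the same observation, and the weight parameters follow exactly as before. I expect no serious obstacle; the only delicate point is confirming the anti-diagonal entries are genuinely nonzero — in the even case this reduces cleanly to $\alpha\neq\alpha^r$, and in the odd case to $b+\overline b\neq 0$, which is the one spot where I would slow down and verify the arithmetic carefully rather than assume it.
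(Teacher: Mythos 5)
Your proposal is correct and follows essentially the same route as the paper: in the odd case use surjectivity of the norm to find $b$ with $b^{r+1}=-1$ and compute $AA^\dagger=\mathrm{adiag}(b+b^r,\,b+b^r)$, and in the even case compute $AA^\dagger=\mathrm{adiag}(\alpha+\alpha^r,\,\alpha+\alpha^r)$ directly in characteristic $2$. In fact you are slightly more careful than the paper, which asserts invertibility and anti-quasi-unitarity without spelling out that $b^2\neq 1$ and $b+b^r\neq 0$ (both follow from $b^{r+1}=-1$ as you indicate) and that $\alpha\neq\alpha^r$ for $\alpha$ primitive; the only blemish is a harmless index slip in your generic formula for the off-diagonal entries of $AA^\dagger$, which does not affect the concrete computations.
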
	
\begin{proof}
   To prove $(1)$, assume that $q$ is odd.  Since the norm is surjective, there exists $b \in \mathbb{F}_q$  such that $N(b)= b^{r+1} = -1$. Let $A = \begin{bmatrix}
    1 & b \\ 
    b & 1 
    \end{bmatrix}$. Clearly, $A$ is invertible, $\delta_1(A)=2$ and  $\delta_2(A)=1$. Since
    \begin{align*}
    AA^\dag & = \begin{bmatrix}
    1 & b \\ 
    b & 1 
    \end{bmatrix} \begin{bmatrix}
    b  & b^{r} \\ 
    b^{r}  & 1 
    \end{bmatrix} \\
    & = \begin{bmatrix}
    1+b^{r+1} & b +b^{r} \\ 
    b+b^r & 1+b^{r+1}
    \end{bmatrix}\\
    & = \begin{bmatrix}
    0 & b + b^{r} \\ 
    b + b^{r}  & 0 
    \end{bmatrix}\\ 
    & = \mathrm{adiag}(b + b^{r} ,b + b^{r} ),
    \end{align*}
    A is (weakly) anti-quasi-unitary.
    
    To prove $(2)$, assume that $q \geq 4$ is even.  Let $A = \begin{bmatrix}
    \alpha  & \alpha^r \\ 
    1  & 1 
    \end{bmatrix}$. Clearly, $A$ is invertible, $\delta_1(A)=2$ and  $\delta_2(A)=1$.
    Since 
    \begin{align*}
    AA^\dag & = \begin{bmatrix}
    \alpha  & \alpha^r \\ 
    1  & 1 
    \end{bmatrix} \begin{bmatrix}
    \alpha^{r}  & 1 \\ 
    \alpha^{r^2}  & 1 
    \end{bmatrix} \\
    & = \begin{bmatrix}
    \alpha^{r+1}+\alpha^{r+1} & \alpha^{r}+\alpha \\ 
    \alpha^r +  \alpha  & 1+1
    \end{bmatrix}\\
    & = \begin{bmatrix}
    0  & \alpha^{r}+\alpha \\ 
    \alpha^{r}+\alpha  & 0
    \end{bmatrix}\\ 
    & = \mathrm{adiag}(\alpha^{r}+\alpha ,\alpha^{r}+\alpha ),
    \end{align*}
    A is (weakly) anti-quasi-unitary.
\end{proof}	
\begin{corollary} \label{lasther}  Let $r$ be a prime power and $q=r^2$.    If there exist  codes $C_1$ and $C_2$ with parameters  $[m,k_1,d_1]_q$ and  $[m,k_2,d_2]_q$ such that $C_1\subseteq C_2^{\perp_{H}}$, then a Hermitian self-orthogonal $[2m,k_1+k_2, d]_q$ code can be constructed with $d\geq \min\{2d_1,d_2\}$.
\end{corollary}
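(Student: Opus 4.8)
The plan is to mirror the proof of Corollary~\ref{cor09}, replacing the quasi-unitary matrix by the anti-quasi-unitary one supplied by Lemma~\ref{lem9} and invoking Theorem~\ref{thm4} in place of Theorem~\ref{thm3}. Given the two input codes $C_1,C_2$ with the stated parameters and $C_1\subseteq C_2^{\perp_H}$, first I would apply Lemma~\ref{lem9} to obtain a $2\times 2$ invertible (weakly) anti-quasi-unitary matrix $A$ over $\mathbb{F}_q$ with $\delta_1(A)=2$ and $\delta_2(A)=1$; note that Lemma~\ref{lem9} covers both the odd and the even case, so such an $A$ exists for every prime power $r$, and correspondingly $AA^\dag$ is anti-diagonal.

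The key observation is that the single hypothesis $C_1\subseteq C_2^{\perp_H}$ already secures both containments demanded by Theorem~\ref{thm4} in the case $s=2$, namely $C_1\subseteq C_{2}^{\perp_H}$ (the instance $i=1$) and $C_2\subseteq C_{1}^{\perp_H}$ (the instance $i=2$). This is because Hermitian orthogonality is symmetric: for $\boldsymbol{u},\boldsymbol{v}\in\mathbb{F}_q^m$ one has $\langle\boldsymbol{v},\boldsymbol{u}\rangle_H=\overline{\langle\boldsymbol{u},\boldsymbol{v}\rangle_H}$, so $\langle\boldsymbol{u},\boldsymbol{v}\rangle_H=0$ if and only if $\langle\boldsymbol{v},\boldsymbol{u}\rangle_H=0$, whence $C_1\subseteq C_2^{\perp_H}$ if and only if $C_2\subseteq C_1^{\perp_H}$. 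With both containments in hand and $AA^\dag$ anti-diagonal, Theorem~\ref{thm4} yields $C_A\subseteq C_A^{\perp_H}$, that is, $C_A$ is Hermitian self-orthogonal.

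Finally I would read off the parameters from Theorem~\ref{thm0}. Since $s=l=2$, the length is $2m$ by part~(1); since $A$ is invertible it is full-row-rank, so $\dim(C_A)=k_1+k_2$ by part~(3); and part~(4) gives $d(C_A)\geq\min\{d_1\delta_1(A),d_2\delta_2(A)\}=\min\{2d_1,d_2\}$. Thus $C_A$ has parameters $[2m,k_1+k_2,d]_q$ with $d\geq\min\{2d_1,d_2\}$, as claimed. I do not expect a genuine obstacle here: the statement is a direct specialization of the machinery already assembled, and the only point requiring a word of justification is the symmetry reduction of the two Theorem~\ref{thm4} hypotheses to the single stated one.
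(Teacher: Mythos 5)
Your proof is correct and follows essentially the same route as the paper: invoke Lemma~\ref{lem9} for the $2\times 2$ invertible weakly anti-quasi-unitary matrix, apply Theorem~\ref{thm4} for self-orthogonality, and read off the parameters from Theorem~\ref{thm0}. Your explicit justification that $C_1\subseteq C_2^{\perp_H}$ implies $C_2\subseteq C_1^{\perp_H}$ via $\langle\boldsymbol{v},\boldsymbol{u}\rangle_H=\overline{\langle\boldsymbol{u},\boldsymbol{v}\rangle_H}$ is a point the paper leaves implicit, and it is indeed needed to satisfy both hypotheses of Theorem~\ref{thm4} when $s=2$.
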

\begin{proof}
    Assume that there exist linear codes $C_1$ and $C_2$ with parameters  $[m,k_1,d_1]_q$ and  $[m,k_2,d_2]_q$ such that $C_1\subseteq C_2^{\perp_{H}}$. By Lemma \ref{lem9}, there exist  a $2\times 2$    invertible and  anti-quasi-orthogonal   matrix $A$ over $\mathbb{F}_q$ with $\delta_1(A)=2$ and $\delta_2(A)=1$. By Theorems \ref{thm0} and \ref{thm4}, the matrix-product code $C_{A}$ is Hermitian self-orthogonal  with parameters $[2m,k_1+k_2, d]_q$   with $d\geq \min\{2d_1,d_2\}$. 
\end{proof} 
\begin{example} Let $\beta$ be a primitive element of $\mathbb{F}_4$. By Lemma \ref{lem9}, we have that $A = 
    \begin{bmatrix} 
    \beta & \beta^{2} \\
    1 & 1
    \end{bmatrix}\in M_{2,2}(\mathbb{F}_4)$ is  invertible,  $AA^\dag = \mathrm{adiag}(1,1)$, $\delta_1(A)=2$, and $\delta_2(A)=1$.  Let $C_1$ and $C_2$ be the linear codes of length $6$ over $\mathbb{F}_4$ generated by \[G_{1} = 
    \begin{bmatrix} 
    1& 1 & 1 & 1& 1 & 1\\
    \beta & \beta^2& \beta^3& \beta^3& \beta^4 & \beta^5
    \end{bmatrix}\] 
    and 
    \[G_2 =\begin{bmatrix}         1 & 1& 1&  1 & 1& 1
    \end{bmatrix} ,\]     respectively.  Then $C_1$ and $C_2$ have  parameters $[6,2,4]_4$ and  $[6,1,6]_4$, respectively. Since $C_2\subseteq C_1\subseteq C_2^{{\perp}_{H}}$, 
    by Theorems \ref{thm0} and \ref{thm4}, $C_A$ is a Hermitian self-orthogonal code with parameters $[12,3,6]_4$.
\end{example}
%
%
%
%
\section{Examples}	
 In this section,   examples of Hermitian self-orthogonal matrix-product codes with good parameters are given.

 Using  Corollary \ref{cor09} and Hermitian self-orthogonal codes form various sources, Hermitian self-orthogonal matrix product codes can be constructed.  Here,   Hermitian self-orthogonal codes given in  \cite{JLLX2010}  are applied  in Corollary \ref{cor09},  and hence,   Hermitian self-orthogonal matrix-product codes with good parameters are obtained.

  In  \cite[Theorem 2.6]{JLLX2010},  it has been shown that  there exists a  Hermitian self-orthogonal $[q+1,k,q-k+2]_q$  code for all $2 \leq k \leq \frac{r}{2}$, where $q=r^2$.
By setting $C_{1}$ and $C_{2}$ be  Hermitian self-orthogonal codes with parameter $[q+1,\left \lfloor \frac{r}{2} \right \rfloor,q-\left \lfloor \frac{r}{2} \right \rfloor+2]_q$ and $[q+1,\left \lfloor \frac{r}{2} \right \rfloor-1,q -\left \lfloor \frac{r}{2} \right \rfloor +3 ]_q$ in  Corollary \ref{cor09}, we have the following result.

\begin{corollary}\label{cor431} Let $r$ be a prime power and $q = r^2$. Then  a Hermitian self-orthogonal $[2(q+1),2\left \lfloor \frac{r}{2} \right \rfloor -1 , d]_q$ code can be constructed with $d\geq q -\left \lfloor \frac{r}{2} \right \rfloor +3$.
\end{corollary}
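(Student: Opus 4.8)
The plan is to apply Corollary~\ref{cor09} directly to the family of Hermitian self-orthogonal codes guaranteed by \cite[Theorem 2.6]{JLLX2010}. First I would invoke that theorem twice: taking $k = \lfloor r/2 \rfloor$ yields a Hermitian self-orthogonal code $C_1$ with parameters $[q+1, \lfloor r/2 \rfloor, q - \lfloor r/2 \rfloor + 2]_q$, and taking $k = \lfloor r/2 \rfloor - 1$ yields a Hermitian self-orthogonal code $C_2$ with parameters $[q+1, \lfloor r/2 \rfloor - 1, q - \lfloor r/2 \rfloor + 3]_q$. Both choices of $k$ lie in the admissible range $2 \leq k \leq r/2$ provided $\lfloor r/2 \rfloor - 1 \geq 2$, so the two input codes are available.

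Next I would feed $C_1$ and $C_2$ into Corollary~\ref{cor09} with $m = q+1$, $d_1 = q - \lfloor r/2 \rfloor + 2$, and $d_2 = q - \lfloor r/2 \rfloor + 3$. That corollary immediately produces a Hermitian self-orthogonal code of length $2m = 2(q+1)$, dimension $k_1 + k_2 = \lfloor r/2 \rfloor + (\lfloor r/2 \rfloor - 1) = 2\lfloor r/2 \rfloor - 1$, and minimum distance $d \geq \min\{2d_1, d_2\}$, which already matches the claimed length and dimension.

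The only remaining step is to identify the lower bound $\min\{2d_1, d_2\}$. I would compute $2d_1 = 2q - 2\lfloor r/2 \rfloor + 4$ and compare it with $d_2 = q - \lfloor r/2 \rfloor + 3$; their difference is $2d_1 - d_2 = q - \lfloor r/2 \rfloor + 1$, which is positive since $q = r^2$ dominates $\lfloor r/2 \rfloor$. Hence $\min\{2d_1, d_2\} = d_2 = q - \lfloor r/2 \rfloor + 3$, giving exactly the stated bound $d \geq q - \lfloor r/2 \rfloor + 3$.

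I do not anticipate a substantive obstacle, since the argument is a direct specialization of Corollary~\ref{cor09} combined with a one-line arithmetic comparison of the two candidate distances. The single point requiring slight care is the admissibility of the parameter $k = \lfloor r/2 \rfloor - 1$ in \cite[Theorem 2.6]{JLLX2010}, which implicitly forces a mild lower bound on $r$ so that both input codes exist; every other step reduces to routine manipulation of the floor function.
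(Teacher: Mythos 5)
Your proposal is correct and follows exactly the paper's route: the authors likewise obtain the two input codes from \cite[Theorem 2.6]{JLLX2010} with $k=\lfloor r/2\rfloor$ and $k=\lfloor r/2\rfloor-1$ and feed them into Corollary~\ref{cor09}, the bound $d\geq q-\lfloor r/2\rfloor+3$ coming from $\min\{2d_1,d_2\}=d_2$. Your remark that $\lfloor r/2\rfloor-1\geq 2$ is needed for the second input code to exist is a valid point of care that the paper leaves implicit (its examples start at $r=7$).
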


From  Corollary \ref{cor431}, some examples of Hermitian self-orthogonal matrix-product codes over $\mathbb{F}_q$   are given in Table~\ref{Tab:SRNRValues}.

\renewcommand{\multirowsetup}{\centering} 
\setlength{\tabcolsep}{10pt}
\begin{table}[!hbt]
    \caption{Hermitian Self-Orthogonal Matrix-Product Codes over $\mathbb{F}_q$}
    \label{Tab:SRNRValues}
    \begin{center}
        \begin{tabular}{|c|c|c|c|}
            \hline
            \multirow{2}{1.5cm}{$q$}& \multicolumn{3}{p{5cm}|}{\centering Parameters} \bigstrut \\
            \cline{2-4} & \multicolumn{1}{c|}{$C_{1}$} & \multicolumn{1}{c|}{$C_{2}$} & \multicolumn{1}{c|}{$C_{A}$} \bigstrut \\ \hline
            49 & $[50,3,48]_{49}$ & $[50,2,49]_{49}$  & $[100,5,d]_{49}$ with $d \geq 49$   \bigstrut \\
            64 & $[65,4,62]_{64}$  & $[65,3,63]_{64}$   & $[130,7,d]_{64}$ with $d \geq 63$  \bigstrut \\
            81 & $[82,4,79]_{81}$  & $[82,3,80]_{81}$  & $[164,7,d]_{81}$ with $d \geq 80$   \bigstrut \\
            121 & $[122,5,118]_{121}$ & $[122,4,119]_{121}$ & $[244,9,d]_{121}$ with $d \geq 119$  \bigstrut \\
            \hline
        \end{tabular}
    \end{center}
\end{table}

Next, we focus on examples of Hermitian self-orthogonal matrix-product codes  derived from Corollary \ref{lasther}. For this case, good input codes can be chosen  from the family of Generalized Reed-Solomon (GRS) codes  recalled as follows.

For each positive integer $n\leq q$,
let $\gamma=(\gamma_{1},\gamma_2,\ldots,\gamma_{n})$ and $w=(w_{1},w_2,\ldots,w_{n})$ where $\gamma_i$ is a non-zero element
and $w_{1},w_2,\ldots w_n$ are distinct elements in $\mathbb{F}_q$.
For each $0\leq k\leq n$, denote by $\mathbb{F}_q[X]_{k}$ the set of all polynomials of degree less
than $k$ over $\mathbb{F}_q$. For convenience, the degree of the zero polynomial is defined to be $-1$.
A {\em GRS code} of length $n \leq q$ and dimension $k \leq n$ is defined to be 
\begin{align}\label{def:GRS}
{GRS}_{n,k}(\gamma,w):=
\left\{(\gamma_{1}f(w_{1}),\gamma_2f(w_2),\ldots,\gamma_{n}f(v_{n})) \mid f(X) \in \mathbb{F}_q[X]_{k}\right\}\text{.}
\end{align}
It is well known  (see \cite{JLLX2010}) that the   ${GRS}_{n,k}(w,\gamma)$ is a linear code with parameters $[n,k,n-k+1]_{q}$ and the Hermitian dual
$({GRS}_{n,k}(w,\gamma))^{\bot_H}$ of ${GRS}_{n,k}(w,\gamma)$ is also a GRS code with  parameters are  $[n,n-k,k+1]_{q}$.
 Moreover, ${GRS}_{n,k}(w,\gamma) \subsetneq {GRS}_{n,k+1}(w,\gamma)$.  By letting  $C_1={GRS}_{n,k}(w,\gamma)$ and $C_2= ({GRS}_{n,k+i}(w,\gamma))^{\perp_H}$  (with  $0\leq i\leq n-k$) in Corollary \ref{lasther}, we have the following  corollary.
 
\begin{corollary}\label{cor5.2}  Let $r$ be a prime power and let $q=r^2$. Let $0\leq k\leq n \leq q$ be integers. Then there exists a Hermitian self-orthogonal matrix-product code with parameters $[2n, n-i, d]_q$ for all $0\leq i \leq n-k$, where $d\geq \min\{2(n-k+1), k+i+1\}$.
    \end{corollary}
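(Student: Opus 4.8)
The plan is to apply Corollary~\ref{lasther} directly, with the two input codes drawn from the GRS family so that the required containment $C_1\subseteq C_2^{\perp_H}$ holds automatically. First I would fix the evaluation data $(w,\gamma)$ and set $C_1={GRS}_{n,k}(w,\gamma)$, which by the recalled properties is a linear code with parameters $[n,k,n-k+1]_q$. Next, for a chosen $i$ with $0\le i\le n-k$, I would set $C_2=({GRS}_{n,k+i}(w,\gamma))^{\perp_H}$; since the Hermitian dual of a $[n,k+i,\,\cdot\,]_q$ GRS code has parameters $[n,\,n-k-i,\,k+i+1]_q$, the code $C_2$ has dimension $k_2=n-k-i$ and minimum distance $d_2=k+i+1$. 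Thus $C_1$ has parameters $[n,k_1,d_1]_q=[n,k,n-k+1]_q$ and $C_2$ has parameters $[n,k_2,d_2]_q=[n,n-k-i,k+i+1]_q$.

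The key step is verifying the hypothesis $C_1\subseteq C_2^{\perp_H}$ of Corollary~\ref{lasther}. Here the nested structure of GRS codes does the work: the chain ${GRS}_{n,k}(w,\gamma)\subseteq{GRS}_{n,k+i}(w,\gamma)$ recalled above gives $C_1\subseteq {GRS}_{n,k+i}(w,\gamma)$, and taking Hermitian duals of the identity $C_2=({GRS}_{n,k+i}(w,\gamma))^{\perp_H}$ yields $C_2^{\perp_H}={GRS}_{n,k+i}(w,\gamma)$ (using that the Hermitian dual is an involution on linear codes). Combining these gives $C_1\subseteq C_2^{\perp_H}$ exactly as required. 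This is the only place where the specific algebraic structure of the input codes enters, and I expect it to be the main point to get right, since it hinges on the monotonicity of the GRS family in the dimension parameter together with double-dual reflexivity.

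Once the containment is established, the conclusion follows by substitution into Corollary~\ref{lasther}. That corollary produces a Hermitian self-orthogonal $[2n,\,k_1+k_2,\,d]_q$ code with $d\ge\min\{2d_1,d_2\}$. Plugging in $k_1+k_2=k+(n-k-i)=n-i$, $d_1=n-k+1$, and $d_2=k+i+1$ gives length $2n$, dimension $n-i$, and $d\ge\min\{2(n-k+1),\,k+i+1\}$, which is precisely the claimed statement. Since $i$ ranges over $0\le i\le n-k$ and $C_2$ is well-defined throughout this range, the result holds for all such $i$.

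The remaining work is purely bookkeeping: confirming that the parameter arithmetic is valid over the whole range of $i$ (in particular that $k+i\le n$ so that ${GRS}_{n,k+i}$ makes sense, which follows from $i\le n-k$) and that $q=r^2$ guarantees a $2\times 2$ invertible weakly anti-quasi-unitary matrix exists so that Corollary~\ref{lasther} is applicable. I anticipate no genuine obstacle beyond carefully tracking the dual-dimension formula $n-(k+i)$ and the dual-distance formula $(k+i)+1$ for the GRS code.
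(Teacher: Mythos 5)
Your proposal is correct and follows exactly the route the paper takes: the paper derives Corollary \ref{cor5.2} by setting $C_1={GRS}_{n,k}(w,\gamma)$ and $C_2=({GRS}_{n,k+i}(w,\gamma))^{\perp_H}$ in Corollary \ref{lasther}, using the nesting of GRS codes and the stated dual parameters. You have merely spelled out the containment check and parameter bookkeeping that the paper leaves implicit.
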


Some examples of good Hermitian self-orthogonal codes over small finite fields derived from Corollary \ref{cor5.2}  of  length $2q$ are given in Table \ref{T2}.

\renewcommand{\multirowsetup}{\centering} 
\setlength{\tabcolsep}{10pt}
\begin{table}[!hbt]
    \caption{Hermitian Self-Orthogonal Matrix-Product Codes over $\mathbb{F}_q$}
    \label{T2}
    \begin{center}
        \begin{tabular}{|c|c|c|c|l|}
            \hline 
            $q$&$n$&$k$&$i$& Parameters  \\
            \hline
           $ 9 $&$ 9 $&$ 5 $&$ 4 $&$[ 18 , 5 ,d]_{ 9 } \text{ with }d\geq  10 $\\
          & &$ 6 $&$ 1 $& $[ 18 , 8 ,d]_{ 9 } \text{ with }d\geq  8 $\\ \hline
           $ 16 $&$ 16 $&$ 9 $&$ 6 $& $[ 32 , 10 ,d]_{ 16 } \text{ with }d\geq  16 $\\
            & &$ 10 $&$ 3 $& $[ 32 , 13 ,d]_{ 16 } \text{ with }d\geq  14 $\\
            & &$ 11 $&$ 0 $& $[ 32 , 16 ,d]_{ 16 } \text{ with }d\geq  12 $\\
           \hline
           $ 25 $&$ 25 $&$ 13 $&$ 12 $& $[ 50 , 13 ,d]_{ 25 } \text{ with }d\geq  26 $\\
           & &$ 14 $&$ 9 $& $[ 50 , 16 ,d]_{ 25 } \text{ with }d\geq  24 $\\
           &&$ 15 $&$ 6 $& $[ 50 , 19 ,d]_{ 25 } \text{ with }d\geq  22 $\\
           &&$ 16 $&$ 3 $& $[ 50 , 22 ,d]_{ 25 } \text{ with }d\geq  20 $\\
           &&$ 17 $&$ 0 $&$[ 50 , 25 ,d]_{ 25 } \text{ with }d\geq  18 $\\
           \hline
           $ 49 $&$ 49 $&$ 25 $&$ 24 $& $[ 98 , 25 ,d]_{ 49 } \text{ with }d\geq  50 $\\
           &&$ 26 $&$ 21 $& $[ 98 , 28 ,d]_{ 49 } \text{ with }d\geq  48 $\\
         &&$ 27 $&$ 18 $& $[ 98 , 31 ,d]_{ 49 } \text{ with }d\geq  46 $\\
           &&$ 28 $&$ 15 $& $[ 98 , 34 ,d]_{ 49 } \text{ with }d\geq  44 $\\
         &&$ 29 $&$ 12 $& $[ 98 , 37 ,d]_{ 49 } \text{ with }d\geq  42 $\\
          &&$ 30 $&$ 9 $& $[ 98 , 40 ,d]_{ 49 } \text{ with }d\geq  40 $\\
          &&$ 31 $&$ 6 $& $[ 98 , 43 ,d]_{ 49 } \text{ with }d\geq  38 $\\
           &&$ 32 $&$ 3 $&$[ 98 , 46 ,d]_{ 49 } \text{ with }d\geq  36 $\\
           &&$ 33 $&$ 0 $& $[ 98 , 49 ,d]_{ 49 } \text{ with }d\geq  34 $\\
           \hline
        \end{tabular}
    \end{center}
\end{table}


For the special case where  $i=0$, or equivalently, $C_1={GRS}_{n,k}(w,\gamma)$ and $C_2=C_1^{\perp_H}= ({GRS}_{n,k}(w,\gamma))^{\perp_H}$,  a Hermitian self-dual matrix-product code  can be constructed via  Corollaries  \ref{cor3.3} and \ref{cor5.2}.

\begin{corollary} Let $r$ be a prime power and let $q=r^2$. Let $0\leq l\leq n \leq q$ be integers. Then there exists a Hermitian self-orthogonal matrix-product code with parameters $[2n, n, d]_q$ with $d\geq \min\{2(n-k+1),k+1\}$.
\end{corollary}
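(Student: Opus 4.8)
The plan is to recognize this corollary as the $i=0$ specialization of Corollary \ref{cor5.2}, while simultaneously invoking Corollary \ref{cor3.3} to upgrade Hermitian self-orthogonality to self-duality. Concretely, I would fix a vector $\gamma$ of non-zero entries and a vector $w$ of distinct entries in $\mathbb{F}_q$, set $C_1 = {GRS}_{n,k}(w,\gamma)$, and take $C_2 = C_1^{\perp_H} = ({GRS}_{n,k}(w,\gamma))^{\perp_H}$. Because $C_2 = C_1^{\perp_H}$, the reflexivity of the Hermitian dual gives $C_2^{\perp_H} = (C_1^{\perp_H})^{\perp_H} = C_1$, so the pair satisfies $C_i = C_{s-i+1}^{\perp_H}$ for $s=2$ and $i \in \{1,2\}$, which is precisely the hypothesis on the input codes in Corollary \ref{cor3.3}.

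Next I would supply the matrix. By Lemma \ref{lem9} there is an invertible, weakly anti-quasi-unitary matrix $A \in M_{2,2}(\mathbb{F}_q)$ with $AA^\dag = \lambda J_2$ for some non-zero $\lambda \in \mathbb{F}_q$ and with $\delta_1(A)=2$ and $\delta_2(A)=1$. Feeding this $A$ together with the pair $(C_1,C_2)$ into Corollary \ref{cor3.3} yields immediately that $C_A$ is Hermitian self-dual, and in particular Hermitian self-orthogonal.

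It then remains to read off the parameters. The length of $C_A$ is $2n$. Since $A$ is full-row-rank and $\dim C_1 = k$, $\dim C_2 = n-k$, Theorem \ref{thm0}(3) gives $\dim C_A = k + (n-k) = n$; as $C_A$ has dimension exactly half its length, its self-orthogonality is equivalent to self-duality, consistent with the conclusion of Corollary \ref{cor3.3}. For the distance, $C_1$ has $d_1 = n-k+1$ and its Hermitian dual $C_2$ has $d_2 = k+1$, so Theorem \ref{thm0}(4) gives $d(C_A) \geq \min\{d_1\delta_1(A),\, d_2\delta_2(A)\} = \min\{2(n-k+1),\, k+1\}$, the claimed bound.

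There is essentially no hard step: the statement is the $i=0$ instance of Corollary \ref{cor5.2}, and the only additional ingredient is the observation that choosing $C_2 = C_1^{\perp_H}$ pushes the construction into the self-dual regime of Corollary \ref{cor3.3}. The one point deserving care is the identification $C_2^{\perp_H} = C_1$, which rests on the reflexivity of the Hermitian dual, together with the elementary remark that a self-orthogonal code of dimension $n$ and length $2n$ must be self-dual; I would also note in passing that the matrices produced by Lemma \ref{lem9} do have $\lambda \neq 0$, so they genuinely satisfy the hypothesis $AA^\dag = \lambda J_2$ required by Corollary \ref{cor3.3}.
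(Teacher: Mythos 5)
Your proposal is correct and follows essentially the same route as the paper: the paper obtains this corollary precisely by taking $i=0$ in the construction behind Corollary \ref{cor5.2} (i.e.\ $C_1={GRS}_{n,k}(w,\gamma)$, $C_2=C_1^{\perp_H}$, and the $2\times 2$ anti-quasi-unitary matrix of Lemma \ref{lem9}) and then invoking Corollary \ref{cor3.3} for self-duality, with the parameters read off from Theorem \ref{thm0} exactly as you do. Your added remarks (reflexivity of the Hermitian dual to verify $C_2^{\perp_H}=C_1$, and that dimension $n$ in length $2n$ forces self-duality) are correct fillings-in of details the paper leaves implicit.
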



\section{Conclusion and Remarks}
The well-known  matrix-product construction for linear codes has been applied to construct Hermitian self-orthogonal codes.   Criterion for the  underlying matrices and the  input codes required in the constructions have been determined.  In many cases,    the Hermitian self-orthogonality of the input codes and the  assumption that the underlying matrix is unitary  can be  relaxed.  Illustrative  examples of good Hermitian self-orthogonal codes have been given as well.

Some special matrices used in the constructions such as weakly quasi-unitary  and weakly anti-quasi-unitary matrices have been given in some cases. In general, the study of a matrix $A \in M_{s,l}(\mathbb{F}_q)$  such that $AA^{\dag}$ is diagonal or  anti-diagonal is also an interesting problem. 

 For applications, it is well know that Hermitian self-orthogonal codes can be applied   in constructing symmetric quantum codes (see, for example,  \cite{JLLX2010}, \cite{JX2012}, \cite{ZG2015},  and \cite{LLY2016}).  Hence, the   codes obtained  in this paper can  be  applied in the such constructions as well.  
 
\section*{Acknowledgment}
This research was supported by the Thailand Research Fund under Research
Grant MRG6080012.

%
%
%
%
%
%

\end{document}